\newtheorem{lemma}{Lemma}
\newtheorem{theorem}{Theorem}
\title{\bf{Pathwise Performance of Debt Based Policies for Wireless Networks with Hard Delay Constraints
}}
\author{Rahul Singh, I-Hong Hou, and P. R. Kumar
  \thanks{This material is based upon work partially supported by AFOSR under contract No. FA 9550-13-1-0008, the NSF under the Science and Technology Center Grant CCF-0939370, and contract Nos. CNS-1232602, CNS-1232601, CNS-1239116, and USARO under Contract No. W911NF-08-1-0238.}
   \\Department of Electrical and Computer Engineering, Texas A\&M University
   \\Email: \{rsingh1,ihou,prk\}@tamu.edu
}
\begin{document}

\maketitle
\begin{abstract}
Hou et al have introduced  a framework to serve clients over wireless channels when there are hard deadline constraints along with a minimum delivery ratio for each client's flow. Policies based on ``debt," called maximum debt first
policies (MDF) were introduced, and shown to be throughput optimal. By ``throughput optimality" it is meant
that if there exists a policy that fulfils a set of clients with a given vector of delivery ratios and a vector of channel reliabilities, then the MDF policy will also fulfill them. The debt of a user is the difference between the number of packets that should have been delivered so as to meet the delivery ratio and the number of packets that have been delivered for that client. The maximum debt first (MDF) prioritizes the clients in decreasing order of debts at the beginning of every period. 
Note that a throughput optimal policy only guarantees that
\begin{small} 
$\liminf_{T \to \infty} \frac{1}{T}\sum_{t=1}^{T} \mathbbm{1}\{ \mbox{client $n$'s packet is delivered in frame $t$} \}\\ \geq q_{i}$
\end{small}, where the right hand side is the required delivery ratio for client $i$. Thus, it only guarantees that the debts of each user are $o(T)$, and can be otherwise arbitrarily large. This raises the interesting question about what is the growth rate of the debts under the MDF policy.
We show the optimality of MDF policy in the case when the channel reliabilities of all users are same, and obtain performance bounds for the general case. For the performance bound we obtain the almost sure bounds on $\limsup_{t\to\infty}\frac{d_{i}(t) }{\phi(t)}$ for all $i$, where $\phi(t) = \sqrt{2t\log\log t}$.
\end{abstract}
\section{Introduction}\label{sec1}
Consider a wireless network consisting of an access point serving $N$ clients as shown in Figure~\ref{fig1}. We will assume that time is divided into slots. Let us call the set of slots $k\tau,\ldots,\left(k+1\right)\tau-1$ as the $k$-th frame, and $\tau$ as the period. At the beginning of each frame, the access point generates one packet for each of the $N$ clients. If such a packet is to be useful, it should be
delivered in the same frame. That is, if a packet generated at the beginning of slot $k\tau$ is not delivered to the client by slot $\left(k+1\right)\tau-1$, then it is considered to have expired and is dropped. 
We shall call the throughput of packets per frame that are delivered for client $i$ as its timely throughput. We shall suppose that each client $i$ requires a timely throughput $q_{i}$.
\begin{figure}[b]
\includegraphics[scale=0.25]{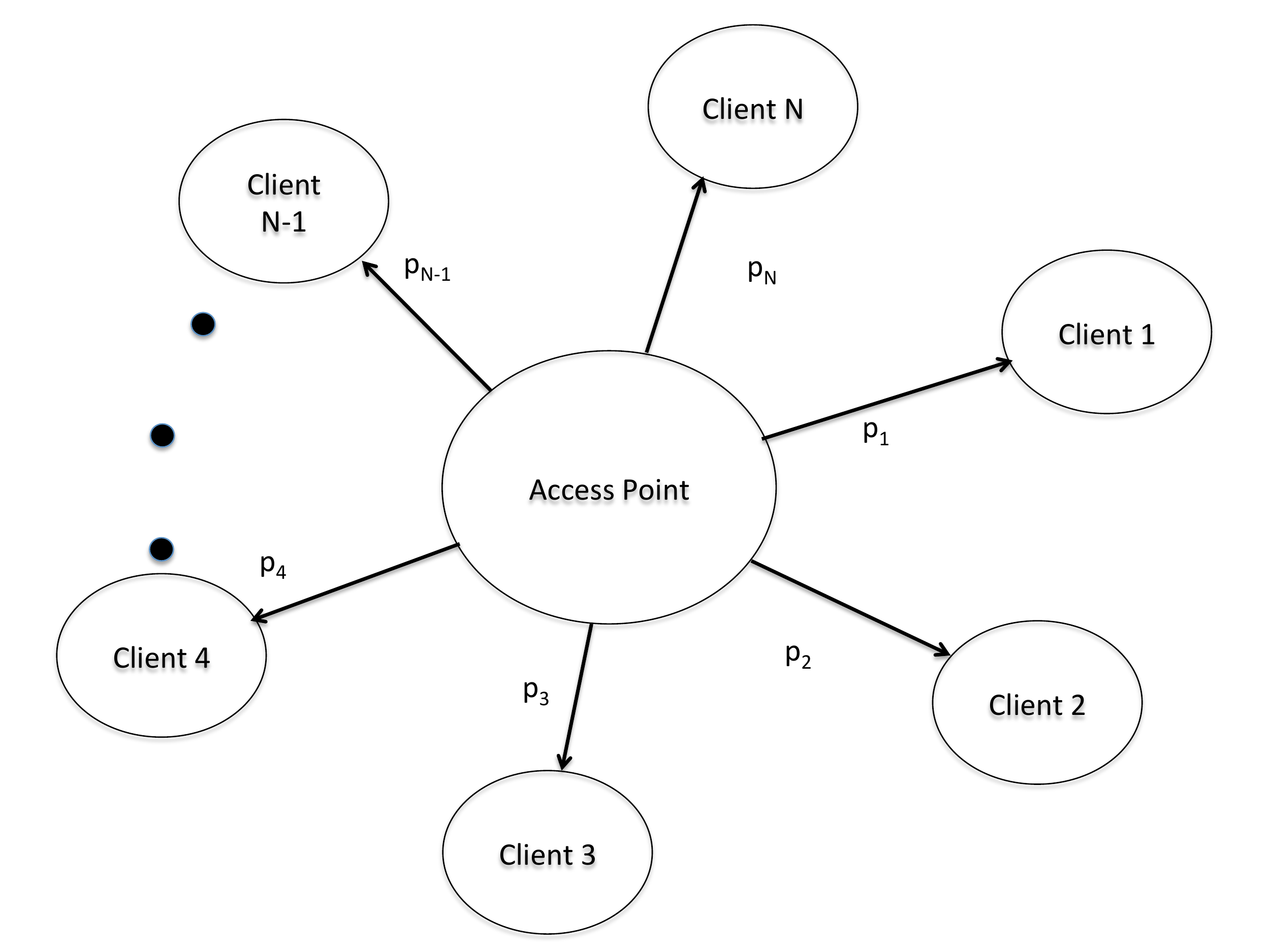}
\caption{An access point serving $N$ real-time flows.}
\label{fig1}
\end{figure}
In each slot, the access point can transmit one packet. So it has to choose which client's packet to transmit, from among those clients that still have undelivered packets in that frame. The channels between the access point and the clients are however not reliable. A packet transmitted to client $i$ is successfully delivered in that slot with probability $p_{i}$, and fails with probability $\left(1-p_{i}\right)$. The access point can retransmit a failed packet at a later slot in the frame, before it expires.\\
This model of a wireless network serving real-time flows with per packet hard deadlines was introduced in Hou et al~\cite{c3}. It is useful in applications such as video streaming, voice over IP, and  real-time applications such as networked control and cyberphysical systems where delay is critical.\\
The problem above is characterized by a period $\tau$, and a channel reliability $p_{i}$ and a timely throughput requirement $q_{i}$ for each flow $i=1,\ldots,N$. The first question that arises is: Can the access point meet the timely throughput requirements of each client? This question is answered in~\cite{c3}, where the set of timely throughput vectors $\left(q_{1},\ldots,q_{N}\right)$ that are feasible is characterized.\\
To describe the characterization, let $\gamma_{i}$ denote the geometrically distributed random variable with parameter $p_{i}$ that denotes the number of attempts needed to deliver a packet of client $i$. If $\sum_{i=1}^{N} \gamma_{i}<\tau$, then that implies that all the packets that arrived at the beginning of a frame have been delivered, and so the remaining time in the frame $\left[\tau-\sum_{i=1}^{N}\gamma_{i}\right]$ is idle time where the access point has no more packets to transmit. The quantity 
\begin{align}\label{idle}
I_{\left\{1,\ldots,N\right\}} : = \frac{1}{\tau} E\left(\left[\tau-\sum_{i=1}^{N} \gamma_{i}\right]^{+} \right),
\end{align} 
where $x^{+} := \max\left(x,0\right)$, denotes the expected proportion of time in a frame that the access point is idle when the set of clients is $\left\{1,\ldots,N\right\}$.\\
Hou et al~\cite{c3} have shown that a vector $\left(q_{1},\ldots,q_{N}\right)$ of timely throughputs is feasible if and only if
\begin{align}\label{feasible}
\sum_{i\in S}\frac{q_{i}}{\tau p_{i}}\leq  1 - I_{S}, \mbox{ for every } S\subseteq \left\{1,\ldots,N\right\},
\end{align}
where $I_{S} = \frac{1}{\tau} E\left(\left[\tau-\sum_{i\in S} \gamma_{i}\right]^{+} \right)$ is the proportion of idle time when the set of clients is $S$. We may call this the ``rate region".\\
It is also shown in~\cite{c3} that the following class of weighted-debt policies will satisfy any set of clients for which the timely throughput vector is feasible. Let
\begin{align}
d_{i}\left(t\right) :=& \frac{1}{\alpha_{i}}\Big[tq_{i} -\mbox{number of packets delivered for client  }\notag\\
&\qquad \qquad\qquad \mbox{ $i$ in the frames $\left\{0,\ldots,t-1\right\} \Big]$}
\end{align}  
denote the weighted packet-debt owed to client $i$ at the beginning of frame $t$, where $\alpha_{i}>0$ is a weighting parameter. They showed that if the clients are ordered in terms of decreasing weighted debt at the beginning of a frame, and then served in that order during the frame, then the resulting weighted debt policy is feasibility optimal. By this is meant that if a set of clients can be satisfied, i.e., satisfies~(\ref{feasible}), then this policy will meet their timely throughput requirements.\\
The timely throughput of client $i$, defined as 
\begin{align*}
\liminf_{T\to\infty}\frac{\mbox{Number of packets of user delivered till frame $T$}}{T}, 
\end{align*}
however only captures the long term average. It could be that there are long runs of frames where packets are not delivered. However such fluctuations in short term behaviour are ignored by the long term time average used in the definition of timely throughput. That is, the timely throughput does not capture a finite grained sense of performance. This is similar to queueing networks where there may be many policies that are throughput optimal, but whose delays may be exceedingly large.\\
Motivated by this, we address the issue of performance of real-time scheduling policies in this paper. In this study, we suppose that the system is in heavy traffic. In our context that corresponds to the vector of timely throughput requirements lying on the boundary of the rate region.\\
We assume the same framework as in~\cite{c3}, and study a scaled version of the unweighted debt process where $\alpha_{i}\equiv 1$, i.e.,
\begin{align}
d_{i}\left(t\right) &:= tq_{i} -\mbox{number of packets delivered for client $i$} \notag\\
&\qquad \qquad \mbox{in the frames $\left\{0,\ldots,t-1\right\}$}.
\end{align}
We scale the debt process by $\phi\left(t\right):=\sqrt{2t\log\log t}$ and study the almost sure limit set of the scaled process $\frac{d_{i}\left(t\right)}{\sqrt{2t\log\log t}}$. This is akin to the quantity studied in the law of the iterated logarithm~\cite{c5}. We determine bounds on the limit set.  \\
This is analogous to studying the ``workload process" in queueing systems in the heavy traffic limit. The difference is that in our case the system is taken to be exactly in heavy traffic, and no limit need to be taken. Also, our characterization by the law of the iterated logarithm is a precise sample path characterization of the performance in heavy traffic.\\
The optimality of throughput-optimal policies has been considered in the ``heavy trafic regime" in~\cite{c1,c2}. Our problem differs from the general switch considered there in that the debt processes can become negative. Moreover the approach differs in that a diffusive scaled version of the ``workload process" is studied and the results are given for a compact time interval. Another approach to show optimality of the MaxWeight scheduler under heavy traffic is taken in~\cite{c2}, where the expected value of steady state queue length, multiplied by the distance of the vector of arrival rates from the hyperplanes is studied and MaxWeight is shown to be optimal. Our approach differs from the previous works in that we consider the throughput requirements to lie on the boundary of the rate region and give results for a scaled version of the debt process  as limit $t\to\infty$.

It has been shown in~\cite{c3}, that each of the $2^{N}$ subsets of clients imposes a ``workload, idle-time" constraint, i.e. the sum of workloads over users in a subset must be less than the total time available to that subset of clients, which is the total time minus time spent idling. 
 Our results proceed by showing that the debt vector evolves along the normal to the hyperplane on which the $\bm{q}$ vector lies. Then we derive an upper bound on the sums of the scaled debt processes, which must be satisfied by any non-idling throughput-optimal policy. These two together give us the bound on individual debt processes.
We formulate the problem in Section~\ref{sec2}. Section~\ref{sec3} contains some preliminary results which will be used in later sections. Remaining sections contain results for various cases. In Section~\ref{sec6} we introduce a notion of optimality and associate a cost with a policy. We show that the MWDF policy is optimal with respect to this cost. Finally Section~\ref{sec:conclusion} summarizes the results. 
\section{Problem Formulation}\label{sec2}
For the remainder of this section, we will assume that $\tau = 1$, i.e. frame-length is one time slot. We consider the case where there are two users with $q_{i}, p_{i}, i = 1,2$ as the parameters. We will analyze the scheduling policy that compares $\frac{d_{1}(t)}{p_{1}}$ and $\frac{d_{2}(t)}{p_{2}}$ and serves the client with higher weighted debt, and refer to it as the Maximum Weighted Debt Policy (MWDF). 
First we will show that, 
\begin{align}\label{eq:pf2}
\lim_{t\to\infty}  \frac{1}{\phi(t)}\left|\frac{d_{1}(t)}{p_{1}}-\frac{d_{2}(t)}{p_{2}}\right| = 0\mbox{ a.s.},
\end{align}
where $\phi(t) = \sqrt{2t\log\log t}  $.
To this end we introduce the following lemma taken from~\cite{c2}.
\begin{lemma}\label{lemma1}
For an irreducible and aperiodic Markov Chain $\{X[t]\}_{t\geq 0}$ over a countable state space $\mathcal{X}$, suppose $Z:\mathcal{X} \to \mathbb{R}_{+} $ is a non-negative valued function. Define drift of $Z$ at $X$ as
\begin{align}
\Delta Z(X) \triangleq [	Z(X[t+1])	- Z(X[t])	] \mathbbm{1}\{ X[t] = X\}.
\end{align}
Let the drift satisfy the following conditions:
\begin{itemize}
\item Condition $\mathcal{C}_{1}$: There exists an $\eta > 0$, and a $\kappa <\infty$ such that 
\begin{align*}
E\left[\Delta Z(X) | X[t] = X\right] \leq -\eta, \forall X\in \mathcal{X} \mbox{ with } Z(X) \geq \kappa.
\end{align*}
\item Condition $\mathcal{C}_{2}$: There exists a $D<\infty$ such that 
\begin{align*}
\Pr(|\Delta Z(X)|\leq D)=1  \forall X\in \mathcal{X}.
\end{align*}
\end{itemize}
Then there exists a $\theta^{\star}>0$ and a $C^{\star}<\infty$ such that 
\begin{align*}
\limsup_{t\to\infty} E[\exp^{\theta^{\star}Z(X[t])}  ] \leq C^{\star}. 
\end{align*}
Furthermore, if the Markov Chain is assumed to be positive recurrent, then $Z(X[t])$ converges in distribution to a random variable $\bar{Z}$ for which 
\begin{align*}
E[\exp^{\theta^{\star}\bar{Z}} ] \leq C^{\star},
\end{align*}
which implies that all moments of $\bar{Z}$ exist and are finite.
 \end{lemma}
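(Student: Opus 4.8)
The plan is to use $e^{\theta Z}$ as an exponential Lyapunov function and to show that, for one carefully chosen small value $\theta=\theta^{\star}$, the one-step conditional expectation of $e^{\theta^{\star}Z}$ contracts by a fixed factor $\rho^{\star}<1$ whenever $Z(X)\ge\kappa$ and is bounded by a constant whenever $Z(X)<\kappa$; the claimed exponential moment bound then drops out of a scalar geometric recursion. The only analytic input is the elementary inequality $e^{x}\le 1+x+\tfrac{x^{2}}{2}e^{\abs{x}}$, valid for every real $x$, used together with Conditions $\mathcal{C}_{1}$ and $\mathcal{C}_{2}$.

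The first step is the drift computation: conditioning on $\{X[t]=X\}$ with $Z(X)\ge\kappa$ and writing $\Delta:=Z(X[t+1])-Z(X[t])$, one applies the elementary inequality with $x=\theta\Delta$ and uses $\abs{\Delta}\le D$ from Condition $\mathcal{C}_{2}$ together with $E[\Delta\mid X[t]=X]\le-\eta$ from Condition $\mathcal{C}_{1}$ to get
\begin{align*}
E\!\left[e^{\theta\Delta}\mid X[t]=X\right]\le 1+\theta\,E[\Delta\mid X[t]=X]+\frac{\theta^{2}D^{2}}{2}e^{\theta D}\le 1-\theta\eta+\frac{\theta^{2}D^{2}}{2}e^{\theta D}.
\end{align*}
The function $g(\theta):=1-\theta\eta+\tfrac{\theta^{2}D^{2}}{2}e^{\theta D}$ satisfies $g(0)=1$ and $g'(0)=-\eta<0$, so there exist $\theta^{\star}>0$ and $\rho^{\star}\in(0,1)$ with $g(\theta^{\star})\le\rho^{\star}$. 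Multiplying by $e^{\theta^{\star}Z(X)}$ gives, on $\{Z(X)\ge\kappa\}$, $E[e^{\theta^{\star}Z(X[t+1])}\mid X[t]=X]\le\rho^{\star}e^{\theta^{\star}Z(X)}$; on $\{Z(X)<\kappa\}$, Condition $\mathcal{C}_{2}$ forces $Z(X[t+1])\le\kappa+D$, hence $E[e^{\theta^{\star}Z(X[t+1])}\mid X[t]=X]\le e^{\theta^{\star}(\kappa+D)}=:B$.

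Then, taking expectations over $X[t]$ and splitting according to whether $Z(X[t])\ge\kappa$, one obtains
\begin{align*}
E\!\left[e^{\theta^{\star}Z(X[t+1])}\right]\le\rho^{\star}\,E\!\left[e^{\theta^{\star}Z(X[t])}\right]+B,
\end{align*}
so (assuming a deterministic, or exponentially integrable, initial state) $E[e^{\theta^{\star}Z(X[t])}]\le(\rho^{\star})^{t}E[e^{\theta^{\star}Z(X[0])}]+B/(1-\rho^{\star})$, and letting $t\to\infty$ yields $\limsup_{t\to\infty}E[e^{\theta^{\star}Z(X[t])}]\le B/(1-\rho^{\star})=:C^{\star}$, the first assertion. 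If in addition the chain is positive recurrent it has a unique stationary law $\pi$, and $Z(X[t])$ converges in distribution to a random variable $\bar Z$ distributed as $Z(X)$ with $X\sim\pi$; by the continuous mapping theorem $e^{\theta^{\star}Z(X[t])}\Rightarrow e^{\theta^{\star}\bar Z}$, and Fatou's lemma for weakly convergent sequences gives $E[e^{\theta^{\star}\bar Z}]\le\liminf_{t\to\infty}E[e^{\theta^{\star}Z(X[t])}]\le C^{\star}$. Since $z^{k}\le(k!/(\theta^{\star})^{k})e^{\theta^{\star}z}$ for $z\ge0$, every moment of $\bar Z$ is then finite.

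The one genuinely load-bearing point is the choice of $\theta^{\star}$: one must check that the second-order correction coming from the bounded-increment hypothesis, which is $O(\theta^{2})$, can be dominated by the first-order gain $\theta\eta$ from the negative-drift hypothesis. This is exactly where $\mathcal{C}_{1}$ and $\mathcal{C}_{2}$ are used in tandem, and it pins $\theta^{\star}$ to roughly $\eta/D^{2}$ and makes $C^{\star}$ depend on $\eta,D,\kappa$ through $\rho^{\star}$ and $B$. The remaining difficulties are bookkeeping: the contraction holds only outside the finite set $\{Z<\kappa\}$, which is absorbed by the additive constant $B$ in the recursion; and the transient term $(\rho^{\star})^{t}E[e^{\theta^{\star}Z(X[0])}]$ must vanish, which is why a deterministic (or exponentially integrable) start is assumed.
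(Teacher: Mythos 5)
Your proposal is correct, but there is nothing in the paper to compare it against: the paper does not prove this lemma at all — it is imported verbatim from reference \cite{c2} (it is essentially the classical Hajek-type exponential drift lemma, as also used by Eryilmaz and Srikant). Your argument is the standard proof of that result: bound $E[e^{\theta\Delta}\mid X[t]=X]$ via $e^{x}\le 1+x+\tfrac{x^{2}}{2}e^{\abs{x}}$ together with $\mathcal{C}_{1}$ and $\mathcal{C}_{2}$, pick $\theta^{\star}$ small enough that the linear gain $-\theta\eta$ beats the $O(\theta^{2})$ correction, absorb the region $\{Z<\kappa\}$ into an additive constant, and iterate the resulting scalar recursion; the steady-state statement then follows from convergence in distribution (total-variation convergence on a countable state space) plus Fatou for weakly convergent sequences. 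All the steps check out, including the splitting of the expectation over $\{Z(X[t])\ge\kappa\}$ and its complement and the moment bound $z^{k}\le (k!/(\theta^{\star})^{k})e^{\theta^{\star}z}$. The one caveat you correctly flag yourself — that the transient term requires the initial state to have a finite exponential moment (e.g.\ a deterministic start) — is also implicit in the lemma as stated in the paper, so your treatment is no weaker than the source; if anything, making that hypothesis explicit is an improvement.
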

 Applying Lemma~\ref{lemma1} we obtain,
 \begin{lemma}\label{lemma2}
 Under the MWDF policy~(\ref{eq:pf2}) holds true.
\end{lemma}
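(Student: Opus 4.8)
The plan is to recognize $\frac{1}{\phi(t)}\left\lvert\frac{d_1(t)}{p_1}-\frac{d_2(t)}{p_2}\right\rvert$ as a scaled, non-negative function of a Markov chain with uniformly negative drift away from the origin, to use Lemma~\ref{lemma1} to obtain a uniform-in-$t$ exponential moment bound, and then to pass to the almost sure statement~(\ref{eq:pf2}) by Borel--Cantelli. First I would set $W(t):=\frac{d_1(t)}{p_1}-\frac{d_2(t)}{p_2}$ and $Z(t):=\abs{W(t)}$. Since $\tau=1$, in frame $t$ the access point makes a single transmission, to the client with the larger $d_i(t)/p_i$; if client $i$ is served then $d_i(t+1)=d_i(t)+q_i-a_i(t)$ with $a_i(t)\sim\mathrm{Bernoulli}(p_i)$, while $d_j(t+1)=d_j(t)+q_j$ for the other client. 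Hence
\begin{align}\label{eq:Wupdate}
W(t+1)=W(t)+\Big(\tfrac{q_1}{p_1}-\tfrac{q_2}{p_2}\Big)-\tfrac{a_1(t)}{p_1}+\tfrac{a_2(t)}{p_2},
\end{align}
and the conditional distribution of the increment depends on the past only through $\mathrm{sign}(W(t))$, so $\{W(t)\}$ is a time-homogeneous Markov chain whose reachable values form a countable subset of $\mathbb{R}$; $Z=\abs{\cdot}$ will be the Lyapunov function fed to Lemma~\ref{lemma1}.

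Next I would verify the two drift hypotheses. Condition $\mathcal{C}_2$ is immediate from~(\ref{eq:Wupdate}): $\abs{W(t+1)-W(t)}\le \abs{\tfrac{q_1}{p_1}-\tfrac{q_2}{p_2}}+\max\{\tfrac{1}{p_1},\tfrac{1}{p_2}\}=:D<\infty$. For $\mathcal{C}_1$ I would invoke heavy traffic: with $\tau=1$ every idle term $I_S$ in~(\ref{feasible}) is zero, so the only nontrivial facet of the rate region is $\{\,q_1/p_1+q_2/p_2=1\,\}$, on which $\bm{q}$ lies by assumption. If $W(t)>0$, MWDF serves client $1$, so $E[W(t+1)-W(t)\mid W(t)]=\tfrac{q_1}{p_1}-\tfrac{q_2}{p_2}-1=-\tfrac{2q_2}{p_2}$; if $W(t)<0$, it serves client $2$ and $E[W(t+1)-W(t)\mid W(t)]=\tfrac{q_1}{p_1}-\tfrac{q_2}{p_2}+1=\tfrac{2q_1}{p_1}$. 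Taking $\kappa:=D$ guarantees that from a state with $\abs{W(t)}\ge\kappa$ the sign of $W$ cannot flip in one step, so $\Delta Z$ coincides with $\pm\Delta W$ with the favorable sign, giving $E[\Delta Z(X)\mid X]\le-\eta$ for $\eta:=2\min\{q_1/p_1,q_2/p_2\}>0$. Lemma~\ref{lemma1} then supplies $\theta^\star>0$ and $C^\star<\infty$ with $\limsup_{t\to\infty}E[e^{\theta^\star Z(t)}]\le C^\star$.

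Finally I would upgrade this to the almost sure limit. Fix $t_0$ so that $E[e^{\theta^\star Z(t)}]\le C^\star+1=:C'$ for all $t\ge t_0$. For any $\varepsilon>0$ and $t\ge t_0$, Markov's inequality yields
\begin{align}\label{eq:tailZ}
\Pr\big(Z(t)>\varepsilon\phi(t)\big)\le C'e^{-\theta^\star\varepsilon\phi(t)}=C'e^{-\theta^\star\varepsilon\sqrt{2t\log\log t}},
\end{align}
which is summable in $t$ because $\sqrt{2t\log\log t}$ eventually dominates $\sqrt{t}$. Hence, by Borel--Cantelli, for each fixed $\varepsilon$ the event $\{Z(t)>\varepsilon\phi(t)\}$ occurs only finitely often almost surely, i.e.\ $\limsup_t Z(t)/\phi(t)\le\varepsilon$ a.s.; intersecting over $\varepsilon=1/k$, $k\in\mathbb{N}$, and using $Z(t)\ge 0$ gives $\lim_{t\to\infty}Z(t)/\phi(t)=0$ a.s., which is exactly~(\ref{eq:pf2}).

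The hard part will be the Markov-chain bookkeeping rather than the analysis: one must confirm that the reachable states of $\{W(t)\}$ form a single aperiodic communicating class so that Lemma~\ref{lemma1} genuinely applies, and one must be careful that the negative drift — which is naturally a statement about $W$ — transfers to $Z=\abs{W}$, which is precisely the role of the buffer $\kappa\ge D$. The degenerate cases $q_1=0$ or $q_2=0$ (where $\eta=0$) have to be set aside, but there the problem reduces to a single nontrivial flow and the claim is trivial; everything else is routine.
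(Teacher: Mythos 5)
Your proposal is correct and follows essentially the same route as the paper: the same Lyapunov function $Z=\left\lvert\frac{d_1}{p_1}-\frac{d_2}{p_2}\right\rvert$, the same heavy-traffic drift computation (yielding $\eta=2\min\{q_1/p_1,q_2/p_2\}$ with a buffer $\kappa$ ensuring the sign of the difference cannot flip), Lemma~\ref{lemma1} for the uniform exponential moment bound, and then Markov's inequality plus Borel--Cantelli with $\phi(t)\geq\sqrt{t}$ to pass to the almost sure limit. Your write-up is in fact somewhat cleaner on the Markov-chain setup and the final Borel--Cantelli step than the paper's version, but there is no substantive difference in approach.
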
 
\begin{proof} Let $Z(\bm{d})  =| \frac{d_{2} }{p_{2}}- \frac{ d_{1} }{p_{1}}| $. Let $\kappa = 1+ \frac{1}{\min(p_{1},p_{2}) }$. Clearly when $Z(\bm{d}(t)) > \kappa$, the client with higher debt will be served in the time slot. Let $\frac{d_{2}(t) }{p_{2}}- \frac{ d_{1}(t) }{p_{1}} > \kappa$. 
Then, if $\bm{d}(t)$ is such that $\frac{d_{2}(t)}{p_{2}}- \frac{d_{1}(t)}{p_{1}}>\kappa$,
\begin{align*}
E( \Delta Z(\bm{d}) | \bm{d} ) & = p_{2}   (\frac{d_{2}+q_{2}-1  }{p_{2}} - \frac{d_{1}+q_{1}  }{p_{1}})\\
&+ (1-p_{2}) (\frac{d_{2}+q_{2}  }{p_{2}} - \frac{d_{1}+q_{1}  }{p_{1}})- (\frac{d_{2} }{p_{2}} - \frac{d_{1} }{p_{1}})\\
& = \frac{q_{2} }{p_{2}}- \frac{q_{1}}{p_{1}} -1\\
&=-2\frac{q_{1}}{p_{1}}.
\end{align*}
Similarly, if $\frac{d_{1}(t)}{p_{1}}- \frac{d_{2}(t)}{p_{2}}>\kappa$, then $E( \Delta Z(\bm{d}) | \bm{d} ) <-2\frac{q_{2}}{p_{2}}$. Note also that if $\frac{d_{2}(t) }{p_{2}}- \frac{d_{1}(t)}{p_{1}} > \kappa$, then 
\begin{align*}
\frac{d_{2}(t+1) }{p_{2}}- \frac{d_{1}(t+1)}{p_{1}} &= \frac{d_{2}(t) }{p_{2}}- \frac{d_{1}(t)}{p_{1} }+ \frac{q_{2}}{p_{2}} -\frac{q_{1}}{p_{1}} \\
&-\frac{\mathbbm{1}\{\mbox{packet for $2$ is delivered}\} }{p_{2}}\\
&\geq \frac{d_{2}(t) }{p_{2}}- \frac{d_{1}(t)}{p_{1} } -1 - \frac{1}{p_{2}}\\
&\geq 0,
\end{align*}
where we have used the fact that $0<\frac{q_{i}}{p_{i}} <1$ and $\frac{d_{2}(t) }{p_{2}}- \frac{d_{1}(t)}{p_{1} } >\kappa$. Hence the clients maintain the order of debts if the difference is greater than $\kappa$. 
So we can take $\eta = -2\min{\frac{q_{1}}{p_{1}}, \frac{q_{2}}{p_{2}}}$.It remains to show that the drifts are bounded for all $\bm{d}$. If $ \frac{d_{2} }{p_{2}}- \frac{d_{1}}{p_{1}} > \kappa$ then since the debts don't change the order, the drift is bounded simply by 
\begin{align*}
\frac{q_{1}}{p_{1}} + \frac{q_{2}}{p_{2}} + \frac{1}{p_{1}} +\frac{1}{p_{2}}.
\end{align*}
On the other hand, if $ \frac{d_{2} }{p_{2}}- \frac{d_{1}}{p_{1}} < \kappa$, then
\begin{align*}
&\left| \frac{d_{2}(t+1)}{p_{2}}	- \frac{d_{1}(t+1)}{p_{1}}\right| - \left| \frac{d_{2}(t)}{p_{2}}	- \frac{d_{1}(t)}{p_{1}}\right| \\
\leq &     \Bigg| \frac{d_{2}(t+1)}{p_{2}}-\frac{d_{2}(t)}{p_{2}}+\frac{d_{2}(t)}{p_{2}}\\
&\qquad+\frac{d_{1}(t)}{p_{1}}	- \frac{d_{1}(t+1)}{p_{1}}-\frac{d_{1}(t)}{p_{1}}\Bigg| + \kappa \\
\leq& \left| \frac{d_{2}(t+1)}{p_{2}}-\frac{d_{2}(t)}{p_{2}} \right| + \left|\frac{d_{1}(t)}{p_{1}}	- \frac{d_{1}(t+1)}{p_{1}}\right|\\
&\qquad+ \left| \frac{d_{2}(t)}{p_{2}}-\frac{d_{1}(t)}{p_{1}} \right| +\kappa \\
\leq &\frac{1+q_{2}}{p_{2}} + \frac{1+q_{1}}{p_{1}} +2\kappa.
\end{align*}
Hence for some finite $N(\epsilon)$, and $\theta^{\star} >0$, 
\begin{align}\label{eq:pf3}
 E[\exp^{\theta^{\star} |\frac{d_{2}(t)}{p_{2}} - \frac{d_{1}(t)}{p_{1}}|}]< C^{\star} , \forall t>N(\epsilon). 
\end{align}

Now applying the Borel Cantelli lemma, we have for any $\delta>0$,
\begin{align}\label{eq:pf4}
&\sum_{t=1}^{\infty} \Pr\left(  \frac{1}{\phi(t)}\Big|\frac{d_{1}(t)}{p_{1}}-\frac{d_{2}(t)}{p_{2}}\Big|  >\delta     \right)  \notag\\
\leq &\frac{BN(\epsilon)}{\delta} + \frac{C^{\star} + \epsilon}{\exp(\theta^{\star} \delta\phi(t) )}  \notag\\
\leq &\frac{BN(\epsilon)}{\delta} +\frac{C^{\star} + \epsilon}{\exp(\delta \theta^{\star}\sqrt{t}) }\notag\\
<&\infty,
\end{align}
where the first inequality follows from inequality~(\ref{eq:pf3}) applied with Markov inequality, and the second inequality follows by the fact that $\phi\left(t\right) > \sqrt{t}$. Now~(\ref{eq:pf4}) completes the proof. 
\end{proof}
\section{Preliminary Results}\label{sec3}
Now we will introduce some auxiliary random variables and results, which will be useful in the later proofs. 
Let 
\begin{align}\label{def:martin}
m_{j}(t) &= \mathbbm{1} \{\mbox{user $j$ is attempted at time $t$  }\} \notag\\
& \quad- \frac{1}{p_{j}} \mathbbm{1} \{\mbox{user $j$ is delivered at time $t$  }\} \notag \\
&=u_{j}(t)  - \frac{1}{p_{j}} g_{j}(t) .
\end{align}
Clearly 
\begin{align}\label{def:martin2}
M_{j}(t) = \sum_{l=1}^{t} m_{j}(l),
\end{align}
is a martingale since
\begin{align*}
& E[M_{j}(t+1)| \mathcal{F}_{t} ] = E[M_{j}(t) + m_{j}(t+1) | \mathcal{F}_{t}]\\
=& M_{j}(t) + E\left[u_{j}(t+1)  - \frac{1}{p_{j}} g_{j}(t+1)| \mathcal{F}_{t}  \right]\\
=& M_{j}(t)+ E\bigg[ E\Big[u_{j}(t+1)  \\ 
&\qquad\qquad- \frac{1}{p_{j}} g_{j}(t+1)\Big| \mathcal{F}_{t},u_{j}(t+1)\Big]\bigg | \mathcal{F}_{t}  \bigg]\\
=& M_{j}(t) 
\end{align*}
We will now provide some limit results for any policy for which $q_{1},q_{2}$ is feasible. First we state the law of large numbers taken from~\cite{c6}\\
\begin{theorem}\label{llnmartingale}
Let $Y_{n} = \sum_{i=1}^{n} X_{i}$ be a Martingale such that $\sum_{n=1}^{\infty} \frac{E(X^{2}_{n})}{n^{2}} <\infty$. Then $\frac{Y_{n}}{n}\to 0$ a.s.
\end{theorem}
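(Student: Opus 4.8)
The plan is to use a truncation argument combined with the martingale convergence theorem, which is the standard route for this classical result (a version of which goes back to Kolmogorov's one-series and three-series theorems, adapted to martingales). First I would define the truncated increments $X_n' := X_n \mathbbm{1}\{|X_n| \le n\}$ and the centered versions $\tilde{X}_n := X_n' - E[X_n' \mid \mathcal{F}_{n-1}]$, so that $\tilde{Y}_n := \sum_{i=1}^n \tilde{X}_i$ is again a martingale with respect to the same filtration. The hypothesis $\sum_n E(X_n^2)/n^2 < \infty$ controls the second moments of the truncated increments: $E[\tilde{X}_n^2] \le E[(X_n')^2] \le E[X_n^2]$, so $\sum_n E[\tilde{X}_n^2]/n^2 < \infty$ as well.

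Next I would invoke the martingale convergence theorem applied to $W_t := \sum_{n=1}^{t} \tilde{X}_n / n$. This is an $L^2$-bounded martingale because its quadratic variation has expectation $\sum_n E[\tilde{X}_n^2]/n^2 < \infty$; hence $W_t$ converges almost surely to a finite limit. Kronecker's lemma then gives $\frac{1}{t}\sum_{n=1}^{t}\tilde{X}_n \to 0$ a.s., i.e. the centered-truncated partial sums obey the SLLN after normalization by $t$.

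It then remains to show that the difference between $Y_n/n$ and $\tilde{Y}_n/n$ vanishes, which has two pieces. For the truncation discrepancy $X_n - X_n'$, I would use $\sum_n \Pr(X_n \neq X_n') = \sum_n \Pr(|X_n| > n) \le \sum_n E[X_n^2]/n^2 < \infty$, so by Borel--Cantelli only finitely many terms differ, and their contribution to the Cesàro average is $o(1)$. For the recentering terms $E[X_n' \mid \mathcal{F}_{n-1}]$, the subtlety is that in the martingale setting $E[X_n \mid \mathcal{F}_{n-1}] = 0$, so $E[X_n' \mid \mathcal{F}_{n-1}] = -E[X_n \mathbbm{1}\{|X_n| > n\} \mid \mathcal{F}_{n-1}]$; I would bound $|E[X_n'\mid\mathcal{F}_{n-1}]| \le E[|X_n|\mathbbm{1}\{|X_n|>n\}\mid\mathcal{F}_{n-1}] \le \frac{1}{n}E[X_n^2\mid\mathcal{F}_{n-1}]$, and then show $\frac{1}{t}\sum_{n=1}^t \frac{1}{n} E[X_n^2 \mid \mathcal{F}_{n-1}] \to 0$ a.s. by another application of Kronecker's lemma to the a.s.-convergent series $\sum_n \frac{1}{n^2} E[X_n^2 \mid \mathcal{F}_{n-1}]$ (whose convergence follows since it has finite expectation).

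The main obstacle is this last point: handling the conditional recentering terms rigorously, since unlike the independent case one cannot simply appeal to Kolmogorov's three-series theorem on deterministic quantities — one must work with the random conditional expectations and verify a.s. convergence of $\sum_n \frac{1}{n^2}E[X_n^2\mid\mathcal{F}_{n-1}]$. Once that is in hand, assembling the three contributions (martingale-convergence piece, truncation piece, recentering piece) gives $Y_n/n \to 0$ a.s.
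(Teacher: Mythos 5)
Your argument is correct, but note that the paper does not prove this statement at all: it is quoted as a known law of large numbers for martingales from the cited reference, so the only comparison available is with the standard textbook proof. Against that benchmark, your route is valid but carries unnecessary machinery. Under the stated hypothesis $\sum_{n} E(X_{n}^{2})/n^{2}<\infty$ every increment already has a finite second moment, so the truncation $X_{n}'=X_{n}\mathbbm{1}\{|X_{n}|\le n\}$ and the conditional recentering are not needed: you can apply your own middle step directly to $W_{t}=\sum_{n=1}^{t}X_{n}/n$, which is a martingale with orthogonal increments, hence $E[W_{t}^{2}]=\sum_{n=1}^{t}E[X_{n}^{2}]/n^{2}$ is bounded, so $W_{t}$ converges a.s.\ by the $L^{2}$-bounded martingale convergence theorem, and Kronecker's lemma gives $Y_{n}/n\to 0$ a.s.\ in two lines. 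The truncation/recentering apparatus (and the care you correctly take with the a.s.\ convergence of $\sum_{n}n^{-2}E[X_{n}^{2}\mid\mathcal{F}_{n-1}]$) is what one needs for weaker hypotheses, e.g.\ conditions on conditional moments of order $p<2$ or the i.i.d.\ first-moment SLLN; here it buys generality the theorem does not ask for, at the cost of the extra Borel--Cantelli and recentering estimates. All of your individual steps do go through (the Chebyshev bound $\Pr(|X_{n}|>n)\le E[X_{n}^{2}]/n^{2}$, the bound $|E[X_{n}'\mid\mathcal{F}_{n-1}]|\le n^{-1}E[X_{n}^{2}\mid\mathcal{F}_{n-1}]$, and the Kronecker applications), so this is a complete proof --- just not the shortest one for the stated hypothesis.
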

This immediately gives us the following for any throughput optimal policy.
\begin{lemma}\label{lemma2.5}
The following is true for any non-idling policy fulfilling $q_{1}, q_{2}$:
\begin{align*}
\lim_{T\to\infty} \frac{\sum_{t=1}^{T} u_{j}(t)  }{T}=\frac{q_{j}}{p_{j}}.
\end{align*}
\end{lemma}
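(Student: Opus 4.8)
The plan is to combine the martingale law of large numbers (Theorem~\ref{llnmartingale}) applied to $M_j(t)$ with the non-idling constraint and the heavy-traffic (boundary) assumption on $(q_1,q_2)$. First I would note that the increments $m_j(t) = u_j(t) - \frac{1}{p_j}g_j(t)$ are uniformly bounded: since $u_j(t), g_j(t)\in\{0,1\}$ and $g_j(t)\le u_j(t)$, we have $|m_j(t)| \le 1/p_j$, so $\sum_{t}\frac{E(m_j(t)^2)}{t^2} \le \frac{1}{p_j^2}\sum_t \frac{1}{t^2} < \infty$. Theorem~\ref{llnmartingale} then gives $M_j(T)/T \to 0$ a.s., i.e.
\begin{align*}
\frac{1}{T}\sum_{t=1}^T u_j(t) - \frac{1}{p_j}\cdot\frac{1}{T}\sum_{t=1}^T g_j(t) \to 0 \quad \text{a.s.}
\end{align*}

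Next, because $\tau = 1$, in every slot there are two fresh pending packets and the policy is non-idling, so exactly one client is attempted in each slot: $u_1(t) + u_2(t) = 1$ for all $t$, whence $\frac{1}{T}\sum_{t=1}^T u_1(t) + \frac{1}{T}\sum_{t=1}^T u_2(t) = 1$ for every $T$. Since the policy fulfills $(q_1,q_2)$, the long-run delivery ratio of each client is at least $q_j$, i.e. $\liminf_T \frac{1}{T}\sum_{t=1}^T g_j(t) \ge q_j$; combined with the display above this gives $\liminf_T \frac{1}{T}\sum_{t=1}^T u_j(t) \ge q_j/p_j$ for $j=1,2$. Finally I invoke the heavy-traffic assumption $q_1/p_1 + q_2/p_2 = 1$ (the boundary constraint for $S=\{1,2\}$, which for $\tau=1$ reads $\sum_i q_i/p_i = 1$ since $I_S=0$): from $\frac1T\sum u_1 + \frac1T\sum u_2 = 1$ we obtain $\limsup_T \frac1T\sum_{t=1}^T u_1(t) = 1 - \liminf_T \frac1T\sum_{t=1}^T u_2(t) \le 1 - q_2/p_2 = q_1/p_1$, and together with the liminf lower bound this forces $\lim_T \frac1T\sum_{t=1}^T u_1(t) = q_1/p_1$; the argument for $j=2$ is symmetric.

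The delicate point is not the martingale estimate but the upgrade from the one-sided guarantee $\liminf \ge q_j$ supplied by throughput-optimality to an exact Ces\`aro limit: this is exactly where the non-idling identity $u_1+u_2\equiv 1$ and the boundary condition $q_1/p_1 + q_2/p_2 = 1$ must be used together — any excess in one client's service rate would force a deficit in the other's, contradicting feasibility. One should also reconcile the indexing between ``packets delivered in frames $\{0,\dots,T-1\}$'' in the definition of $d_j$ and $\sum_{t=1}^T g_j(t)$, but this is an $O(1)$ shift that does not affect the limit.
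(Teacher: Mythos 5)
Your proposal is correct and follows essentially the same route as the paper: apply the martingale strong law (Theorem~\ref{llnmartingale}) to $M_j$ to equate the attempt and delivery rates, use fulfillment to get $\liminf_T \frac{1}{T}\sum_t u_j(t) \ge q_j/p_j$, and then combine the non-idling identity $\sum_j \frac{1}{T}\sum_t u_j(t)=1$ with the boundary condition $\sum_j q_j/p_j = 1$ to upgrade the liminf to an exact limit (the paper phrases this last step as a contradiction, you phrase it directly via the limsup, but it is the same argument).
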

\begin{proof}
 Clearly $E(m_{j}^{2} (t)  ) \leq (1+\frac{1}{p_{j}})^{2}$ and so the conditions of Theorem~(\ref{llnmartingale}) are satisfied. Hence,
\begin{align}
\frac{M_{j}(t)}{t} \to 0,
\end{align}
and so
\begin{align}
 \frac{\sum_{l=1}^{t}u_{j}(l)}{t} \to \frac{1}{p_{j}}\frac{\sum_{l=1}^{t}s_{j}(l)}{t}.
\end{align}
But since $\liminf_{t\to\infty} \frac{\sum_{l=1}^{t}s_{j}(l)}{t} \geq q_{j}$, we have 
\begin{align}\label{eq:100}
\liminf_{t\to\infty} \frac{\sum_{l=1}^{t}u_{j}(l)}{t} \geq \frac{q_{j} }{p_{j}}.
\end{align}
Now, 
\begin{align*}
\sum_{j=1}^{N}   \frac{\sum_{l=1}^{t}u_{j}(l)}{t} = 1,
\end{align*}
holds for all $t$. If $\limsup_{t\to\infty}   \frac{\sum_{l=1}^{t}u_{j}(l)}{t} >\frac{q_{j}}{p_{j}}$, then this would imply $\liminf_{t\to\infty}   \frac{\sum_{l=1}^{t}u_{k}(l)}{t} <\frac{q_{k}}{p_{k}}$ for some $k$ since $\sum_{j=1}^{N}\frac{q_{j}}{p_{j}} = 1$, which contradicts~(\ref{eq:100}). This completes the proof.
\end{proof}
Let $(X_{n},\mathcal{F}_{n})_{n\geq 0}$ be a Martingale with $Y_{n} = X_{n} - X_{n-1}$ and $s^{2}_{n} =  \sum_{i=1}^{n} E[Y^{2}_{i} | \mathcal{F}_{i-1}     ]$. Also let $u_{n} = \sqrt{2n\log\log n }$

The following is the law of iterated logarithm for martingales~\cite{c4}.
\begin{theorem}
\label{theorem:stout}
If $s_{n}^{2} \to \infty$ and $|Y_{n}| \leq \frac{ K_{n}  s_{n} }{u_{n}}$, where $K_{n}$ are $\mathcal{F}_{n-1}$ measurable with $K_{n} \to 0$, then 
\begin{align}\label{eq:pf6}
\limsup_{n\to\infty}  \frac{X_{n}}{s_{n} u_{ n}} = 1.
\end{align}
\end{theorem}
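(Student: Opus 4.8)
\emph{Proof sketch (proposal).} The plan is to establish the two one-sided bounds $\limsup_{n}X_n/b_n\le1$ and $\limsup_{n}X_n/b_n\ge1$ a.s.\ separately, where $b_n:=s_n u_n$ is the usual martingale-LIL normalization $\sqrt{2s_n^2\log\log s_n^2}$ and $\sigma_i^2:=E[Y_i^2\mid\mathcal F_{i-1}]$; the full statement then follows by applying the upper bound to both $X$ and $-X$. For the \emph{upper bound}, the engine is an exponential supermartingale: the elementary inequality $e^x\le 1+x+\tfrac12 x^2e^{\lvert x\rvert}$ together with the hypothesis $\lvert Y_n\rvert\le K_n s_n/u_n$ gives $E[e^{\lambda Y_n}\mid\mathcal F_{n-1}]\le\exp\!\big(\tfrac12\lambda^2\sigma_n^2\,e^{\lambda K_n s_n/u_n}\big)$ for every $\lambda>0$. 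Fix $\epsilon,\delta>0$, let $\tau_L:=\inf\{n:s_{n+1}^2>L\}$ (so $s_n^2\le L$ on $\{n\le\tau_L\}$), and take $\lambda=\lambda(L):=\sqrt{(2\log\log L)/L}$; because the increment bound forces $s_n^2/s_{n-1}^2\to1$, the $n$ with $s_n^2$ near $L$ are all large, so $e^{\lambda K_n s_n/u_n}\le1+\epsilon$ there and $\exp\!\big(\lambda X_{n\wedge\tau_L}-\tfrac12(1+\epsilon)\lambda^2 s_{n\wedge\tau_L}^2\big)$ is a nonnegative supermartingale started from $1$. Doob's maximal inequality along the geometric ladder $L_k=(1+\delta)^k$ then yields $\Pr\big(\sup_{n\le\tau_{L_k}}X_n\ge(1+\epsilon)\sqrt{2L_k\log\log L_k}\big)\le(\log L_k)^{-(1+\epsilon)}$, which is summable in $k$; Borel--Cantelli, the monotonicity of $L\mapsto\sqrt{2L\log\log L}$, and the fact that $s_n^2\in(L_{k-1},L_k]$ on the $k$-th block give $\limsup_n X_n/b_n\le(1+\epsilon)\sqrt{1+\delta}$, and we let $\epsilon,\delta\downarrow0$.

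For the \emph{lower bound} I would use a blocking argument. Set $n_k:=\tau_{c^k}$ for a large constant $c$, so the block increment $D_k:=X_{n_{k+1}}-X_{n_k}$ has conditional variance $\approx c^k(c-1)$. The decisive estimate is a matching \emph{lower} exponential bound, namely $\Pr\big(D_k\ge(1-\epsilon)\sqrt{2c^k(c-1)\log\log c^{k+1}}\;\big|\;\mathcal F_{n_k}\big)\ge(\log c^{k+1})^{-(1-\epsilon/2)}$ up to subpolynomial factors; this is \emph{not} summable, so the conditional (L\'evy) Borel--Cantelli lemma forces $D_k$ above that threshold for infinitely many $k$ a.s. Since the upper bound applied to $\pm X$ makes $\lvert X_{n_k}\rvert$ negligible relative to $b_{n_{k+1}}\approx\sqrt{2c^{k+1}\log\log c^{k+1}}$, we obtain $X_{n_{k+1}}/b_{n_{k+1}}\ge\big[(1-\epsilon)\sqrt{c-1}-(1+\epsilon)\big]/\sqrt c$ infinitely often, and the right-hand side tends to $1-\epsilon$ as $c\to\infty$; letting $\epsilon\downarrow0$ completes the argument.

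The main obstacle is precisely that matching lower bound for the block increments: martingale differences are neither independent nor Gaussian, so one must show that the normalized block sum is close enough to a standard normal in the relevant moderate-deviation range, and it is exactly the hypothesis $K_n\to0$ --- a conditional Lindeberg / uniform-smallness condition --- that makes this possible. I would secure it in one of two ways: (i) via a Skorokhod embedding, writing $X_n=B(T_n)$ for a Brownian motion $B$ and stopping times with $E[T_n-T_{n-1}\mid\mathcal F_{n-1}]=\sigma_n^2$, proving $T_n/s_n^2\to1$ a.s.\ from the increment bound, and then transferring \emph{both} halves of the LIL directly from the Brownian-motion LIL; or (ii) by an Esscher tilt on each block combined with a martingale Berry--Esseen estimate controlling the moderate-deviation probability. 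Route (i) is the cleaner one, since it handles the upper half simultaneously, and is how I would actually write the proof.
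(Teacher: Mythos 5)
This theorem is not proved in the paper at all: it is quoted, with attribution, as Stout's martingale law of the iterated logarithm from~\cite{c4}, so there is no internal proof to compare your attempt against. Judged on its own, your sketch reconstructs the classical Kolmogorov--Stout strategy and is on the right track: the upper half via the exponential supermartingale bound $E[e^{\lambda Y_n}\mid\mathcal{F}_{n-1}]\leq\exp\bigl(\tfrac{1}{2}\lambda^{2}\sigma_n^{2}e^{\lambda K_n s_n/u_n}\bigr)$, Doob's maximal inequality along a geometric ladder in $s_n^{2}$, and Borel--Cantelli is exactly how the cited result is proved, and the lower half by blocking plus a conditional (L\'evy) Borel--Cantelli argument is also the standard route. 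Note that you silently, and correctly, read the normalization as $s_n u_n=\sqrt{2s_n^{2}\log\log s_n^{2}}$, i.e.\ $u_n=\sqrt{2\log\log s_n^{2}}$; with the paper's literal $u_n=\sqrt{2n\log\log n}$ the statement would be false (for bounded i.i.d.\ increments $s_n u_n\asymp n\sqrt{\log\log n}$ and the limsup is $0$), so this is a typo in the paper's transcription of Stout's theorem, worth flagging rather than reproducing.

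Two places in your sketch are genuinely incomplete rather than just compressed. First, the matching conditional lower exponential bound for the block increments is the actual substance of the lower half of Stout's theorem; you state it as ``the decisive estimate'' but do not prove it, and it is not a routine consequence of $K_n\to 0$ --- it requires the Kolmogorov-style lower exponential inequality carried out conditionally, which is where most of the work in~\cite{c4} lies. Second, your preferred route (i) is less clean than claimed: with Skorokhod embedding one has $E[(T_n-T_{n-1})^{2}\mid\mathcal{F}_{n-1}]\leq C\,E[Y_n^{4}\mid\mathcal{F}_{n-1}]\leq C K_n^{2}\sigma_n^{2}s_n^{2}/u_n^{2}$, so the natural Chow/Kronecker series for proving $T_n/s_n^{2}\to 1$ is dominated by $\sum_n \sigma_n^{2}/(s_n^{2}\log\log s_n^{2})$, which can diverge (e.g.\ $s_n^{2}\asymp n$); since $K_n\to 0$ gives only boundedness, not summability, you need an additional blocking or exponential estimate on the embedded times (as in Strassen's or Hall--Heyde's treatments) before the Brownian LIL transfers. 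So: as a blind reconstruction your outline matches the literature proof the paper invokes, but both of the hard steps are asserted rather than established.
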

 The following lemma uses Theorem~(\ref{theorem:stout}) and will be crucial in later results. Equalities are to be understood in almost sure sense.
 \begin{lemma}\label{lemma3}
 Consider the Martingale $M_{j}(t)$. The following are true for any non-idling policy that fulfils $q_{1},q_{2}$:
 \begin{enumerate}[(i).]
 \item $E(m_{j}^{2}(t) |\mathcal{F}_{t-1} ) = u_{j}(t)  v_{j} $, with $v_{j} = \frac{(1-p_{j})}{p_{j}}$. 
 \item $\mathcal{S}^{2}_{j}(T) \triangleq \sum_{j=1}^{T}E(m_{j}^{2}(t) |\mathcal{F}_{t-1} ) = \sum_{t=1}^{T}u_{j}(t)  v_{j} $.
 \item $\lim_{T\to\infty}\frac{\mathcal{S}^{2}_{j}(T) }{T}  = \frac{q_{j} }{p_{j}} v_{j}  = \frac{q_{j} (1-p_{j})}{p_{j}^{2}}\triangleq c_{j} $.
 \item $\lim_{T\to\infty} \frac{\mathcal{S}_{j}(T) }{\sqrt{T}}  = \sqrt{c_{j}}  $
 \end{enumerate}
 Using the above three, also noting that $m_{j}(t)$ is bounded for all $t$, and appliying Theorem~\ref{theorem:stout} with, $K_{n} = \frac{\sqrt{\log\log n }} { \sum_{t=1}^{n}E(m_{j}^{2}(t) |\mathcal{F}_{t-1} )} \to 0	$ gives us,
 \begin{align}\label{eq:pf7}
 \limsup_{T\to\infty} \frac{M_{j}(t)}{\phi(t)} = \sqrt{c_{j}}.
 \end{align}
 \end{lemma}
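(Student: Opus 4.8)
The plan is to prove the four numbered identities in turn --- each a short computation --- and then feed them into Theorem~\ref{theorem:stout} applied to the martingale $M_j$. For~(i), expand $m_j(t)^2 = \bigl(u_j(t) - \tfrac{1}{p_j}g_j(t)\bigr)^2$. Since $u_j(t),g_j(t)\in\{0,1\}$ and $g_j(t)\le u_j(t)$ (user $j$ can be delivered in slot $t$ only if attempted in slot $t$), we have $u_j(t)^2=u_j(t)$, $g_j(t)^2=g_j(t)$ and $u_j(t)g_j(t)=g_j(t)$, so $m_j(t)^2 = u_j(t) - \bigl(\tfrac{2}{p_j}-\tfrac{1}{p_j^2}\bigr)g_j(t)$. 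Conditioning on $\mathcal{F}_{t-1}$, under which $u_j(t)$ is known (the slot-$t$ schedule is a function of the debts at the start of the frame; for a randomized rule one conditions in addition on $u_j(t)$, exactly as in the martingale verification following~\eqref{def:martin2}), and using $E[g_j(t)\mid\mathcal{F}_{t-1}]=p_j u_j(t)$, I get $E[m_j^2(t)\mid\mathcal{F}_{t-1}] = u_j(t) - \bigl(\tfrac{2}{p_j}-\tfrac{1}{p_j^2}\bigr)p_j u_j(t) = \bigl(\tfrac{1}{p_j}-1\bigr)u_j(t) = v_j u_j(t)$, which is~(i), and summing over $t=1,\dots,T$ is~(ii).

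For~(iii), divide~(ii) by $T$ and apply Lemma~\ref{lemma2.5}, which gives $\tfrac{1}{T}\sum_{t=1}^{T}u_j(t)\to\tfrac{q_j}{p_j}$; hence $\tfrac{\mathcal{S}_j^2(T)}{T}\to v_j\tfrac{q_j}{p_j}=\tfrac{q_j(1-p_j)}{p_j^2}=c_j$, and~(iv) is then immediate from the continuity of $\sqrt{\cdot}$ (we are in the non-degenerate case $c_j>0$, i.e.\ $q_j>0$ and $p_j<1$; if $c_j=0$ both sides of~\eqref{eq:pf7} vanish and there is nothing to prove). To conclude~\eqref{eq:pf7} I apply Theorem~\ref{theorem:stout} to $M_j(t)=\sum_{l\le t}m_j(l)$ with increments $Y_t=m_j(t)$ and $s_T^2=\mathcal{S}_j^2(T)$. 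The first hypothesis $s_T^2\to\infty$ holds since $s_T^2\sim c_j T$ by~(iii). For the increment condition, $|m_j(t)|\le\max(1,\tfrac{1}{p_j})=\tfrac{1}{p_j}$ is uniformly bounded while $s_T/\sqrt{\log\log s_T^2}\sim\sqrt{c_jT/\log\log T}\to\infty$, so the bound $|Y_T|\le K_T\, s_T/\sqrt{\log\log s_T^2}$ holds for all large $T$ with an $\mathcal{F}_{T-1}$-measurable sequence $K_T\to0$ (a constant multiple of $\sqrt{\log\log s_T^2}/s_T$, which is measurable because $s_T^2=\sum_{t\le T}v_j u_j(t)$ is $\mathcal{F}_{T-1}$-measurable). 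Theorem~\ref{theorem:stout} then yields, in its quadratic-variation form, $\limsup_{T\to\infty} M_j(T)/\sqrt{2s_T^2\log\log s_T^2}=1$. Finally, substituting the pathwise asymptotics $s_T^2\sim c_jT$ from~(iii) into the slowly varying normalizer gives $\sqrt{2s_T^2\log\log s_T^2}\sim\sqrt{c_j}\,\sqrt{2T\log\log T}=\sqrt{c_j}\,\phi(T)$, whence $\limsup_{T\to\infty} M_j(T)/\phi(T)=\sqrt{c_j}$, which is~\eqref{eq:pf7}.

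The algebra in~(i)--(iv) is entirely routine; the step that needs genuine care is the last one, replacing $s_T^2$ by $c_jT$ inside $\sqrt{2s_T^2\log\log s_T^2}$. Because~(iii) is an almost sure statement $s_T^2/T\to c_j$, one works on the full-probability event where this convergence \emph{and} the martingale law of the iterated logarithm both hold, writes $s_T^2=c_jT(1+\varepsilon_T)$ with $\varepsilon_T\to0$, and checks that the resulting $(1+o(1))$ factor on the normalizer --- together with the harmless replacement of $\log\log s_T^2$ by $\log\log T$, justified since $\log\log$ is slowly varying --- does not disturb the $\limsup$. That bookkeeping, though elementary, is the main obstacle; a secondary point worth stating explicitly is the measurability of $u_j(t)$ with respect to $\mathcal{F}_{t-1}$, which is what makes the right-hand side of~(i) (and the predictable quadratic variation $s_T^2$) well defined as written.
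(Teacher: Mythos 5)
Your proof is correct and follows essentially the same route the paper sketches inside the lemma statement: compute the conditional variance $E[m_j^2(t)\mid\mathcal{F}_{t-1}]=v_j u_j(t)$, sum to get $\mathcal{S}_j^2(T)$, invoke Lemma~\ref{lemma2.5} to conclude $\mathcal{S}_j^2(T)\sim c_j T$ a.s., and apply the martingale law of the iterated logarithm (Theorem~\ref{theorem:stout}) with a vanishing predictable bound $K_T$, finally transferring the normalizer to $\phi(T)$. Your application uses the standard quadratic-variation form of Stout's theorem together with the explicit substitution $\mathcal{S}_j^2(T)\sim c_jT$, which is precisely what the paper intends, so nothing further is needed.
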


\begin{theorem}\label{theorem:main}
Under the MWDF policy, 
\begin{align}
\limsup_{t\to\infty}\frac{d_{j}(t)}{\phi(t)} \leq p_{j} \frac{\sqrt{c_{1}} + \sqrt{c_{2}}}{2} .
\end{align}
\end{theorem}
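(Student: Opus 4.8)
The plan is to collapse the two scaled debt processes $d_1(t)/\phi(t)$ and $d_2(t)/\phi(t)$ onto a single martingale, using the near‑equality of the two weighted debts from Lemma~\ref{lemma2} together with a conservation identity that holds for every non‑idling policy in heavy traffic, and then to read off the bound from the martingale law of the iterated logarithm already recorded in Lemma~\ref{lemma3}. For the conservation identity, note that with $\tau=1$ exactly one user is attempted per slot, so $u_1(t)+u_2(t)=1$ and hence $\sum_{l=1}^t u_1(l)+\sum_{l=1}^t u_2(l)=t$. From the definition~(\ref{def:martin}) of $m_j$ we have $g_j(l)=p_j u_j(l)-p_j m_j(l)$, so the number of packets delivered to client $j$ through frame $t-1$ equals $p_j\sum_{l=1}^t u_j(l)-p_j M_j(t)$, giving
\begin{align}\label{eq:pwdbase}
\frac{d_j(t)}{p_j}=t\,\frac{q_j}{p_j}-\sum_{l=1}^t u_j(l)+M_j(t),\qquad j=1,2 .
\end{align}
Since $I_S=0$ for all $S$ when $\tau=1$, the heavy‑traffic hypothesis (the requirement vector on the boundary of~(\ref{feasible})) is precisely $\frac{q_1}{p_1}+\frac{q_2}{p_2}=1$, so summing~(\ref{eq:pwdbase}) over $j$ and using $\sum_l u_1(l)+\sum_l u_2(l)=t$ yields the exact identity
\begin{align}\label{eq:pwcons}
\frac{d_1(t)}{p_1}+\frac{d_2(t)}{p_2}=M_1(t)+M_2(t).
\end{align}

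Next I would isolate $d_j$: writing $\tfrac{2d_j(t)}{p_j}=\bigl(\tfrac{d_1(t)}{p_1}+\tfrac{d_2(t)}{p_2}\bigr)\pm\bigl(\tfrac{d_1(t)}{p_1}-\tfrac{d_2(t)}{p_2}\bigr)$, dividing by $\phi(t)$, and combining~(\ref{eq:pwcons}) with Lemma~\ref{lemma2}, i.e.~(\ref{eq:pf2}) (which makes the second bracket $o(\phi(t))$ a.s.), we get almost surely $\tfrac{d_j(t)}{\phi(t)}=\tfrac{p_j}{2}\,\tfrac{M_1(t)+M_2(t)}{\phi(t)}+o(1)$. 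Taking $\limsup$, noting that an $o(1)$ perturbation does not change it, and then using subadditivity of $\limsup$,
\begin{align}\label{eq:pwfin}
\limsup_{t\to\infty}\frac{d_j(t)}{\phi(t)}=\frac{p_j}{2}\limsup_{t\to\infty}\frac{M_1(t)+M_2(t)}{\phi(t)}\le\frac{p_j}{2}\Bigl(\limsup_{t\to\infty}\frac{M_1(t)}{\phi(t)}+\limsup_{t\to\infty}\frac{M_2(t)}{\phi(t)}\Bigr).
\end{align}
Finally, equation~(\ref{eq:pf7}) of Lemma~\ref{lemma3} identifies the two remaining $\limsup$'s as $\sqrt{c_1}$ and $\sqrt{c_2}$, which gives $\limsup_t d_j(t)/\phi(t)\le p_j(\sqrt{c_1}+\sqrt{c_2})/2$.

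The argument is short once~(\ref{eq:pwcons}) is in place, so the only points needing care are structural: MWDF is non‑idling (forced by $\tau=1$) and feasibility‑optimal, hence fulfils $q_1,q_2$, so Lemma~\ref{lemma3} and the law of large numbers underlying it do apply to it, and Lemma~\ref{lemma2} is exactly what certifies that the two weighted debts differ by $o(\phi(t))$. I expect the subadditivity step in~(\ref{eq:pwfin}) to be the ``lossy'' one: it replaces $\sqrt{c_1+c_2}$ by $\sqrt{c_1}+\sqrt{c_2}$. If one wants the sharper constant $\tfrac{p_j}{2}\sqrt{c_1+c_2}$, one applies Theorem~\ref{theorem:stout} directly to $M_1+M_2$, whose conditional variance is $u_1(t)v_1+u_2(t)v_2$ (the cross terms vanish since at most one user is attempted per slot) with Cesàro limit $c_1+c_2$; since $\sqrt{c_1+c_2}\le\sqrt{c_1}+\sqrt{c_2}$, this is consistent with, and stronger than, the stated bound.
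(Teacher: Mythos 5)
Your proof is correct and follows essentially the same route as the paper: the same conservation identity $\tfrac{d_1(t)}{p_1}+\tfrac{d_2(t)}{p_2}=M_1(t)+M_2(t)$ (the paper derives it directly from the definitions of $M_j$ and the heavy-traffic condition $\tfrac{q_1}{p_1}+\tfrac{q_2}{p_2}=1$), the same use of Lemma~\ref{lemma2} to kill the scaled difference of weighted debts, and the same appeal to~(\ref{eq:pf7}) with subadditivity of $\limsup$. Your closing observation that applying Theorem~\ref{theorem:stout} directly to $M_1+M_2$ (whose conditional variance increments are $u_1(t)v_1+u_2(t)v_2$, cross terms vanishing) would sharpen the constant to $\tfrac{p_j}{2}\sqrt{c_1+c_2}$ is a valid refinement beyond what the paper states.
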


\begin{proof} By~(\ref{def:martin}) and~(\ref{def:martin2}), we have,
\begin{align}
M_{1}(t) + M_{2}(t) &= t - \frac{s_{1} (t)}{p_{1}} -\frac{s_{2} (t) }{p_{2}}\notag\\
&= t(\frac{q_{1}}{p_{1}}  + \frac{q_{2}}{p_{2}})  - \frac{s_{1} (t)}{p_{1}} -\frac{s_{2} (t) }{p_{2}}\notag\\
&= \frac{d_{1} (t)}{p_{1}} + \frac{d_{2} (t)}{p_{2}}.
\end{align}
Now,
\begin{align*}
\begin{array}{rr@{\hspace{0em}}c@{\hspace{0em}}l}
& 2\frac{d_{1} (t)}{p_{1}} + \frac{d_{2} (t)}{p_{2}}-\frac{d_{1} (t)}{p_{1}} &=& M_{1}(t) + M_{2}(t) \notag \\
\implies &\frac{1}{\phi(t)}   \left[2\frac{d_{1} (t)}{p_{1}} + \frac{d_{2} (t)}{p_{2}}-\frac{d_{1} (t)}{p_{1}} \right] &=&\frac{M_{1}(t) }{\phi(t)}+ \frac{M_{2}(t)}{\phi(t)}\\
\implies &\frac{2}{\phi(t)}   \frac{d_{1} (t)}{p_{1}}  &=&\frac{M_{1}(t) }{\phi(t)}+ \frac{M_{2}(t)}{\phi(t)} \\
&&& -\left[  \frac{d_{2} (t)}{p_{2}}-\frac{d_{1} (t)}{p_{1}}    \right]\\
\end{array}
\end{align*}
\begin{align*}
\implies \limsup_{t\to\infty}\frac{2}{\phi(t)}   \frac{d_{1} (t)}{p_{1}}  &\leq \limsup_{t\to\infty}\frac{M_{1}(t) }{\phi(t)}\\
&+ \limsup_{t\to\infty}\frac{M_{2}(t)}{\phi(t)} \\
&+\limsup_{t\to\infty} \left[  \frac{d_{2} (t)}{p_{2}}-\frac{d_{1} (t)}{p_{1}}    \right].
\end{align*}
The third of the above quantities on right side is $0$ by Lemma~\ref{lemma2}. So applying Lemma~\ref{lemma3} we get the required result. 
\end{proof}

 \section{Two Users With General Frame length}\label{sec4}
We will look at the case when the timely throughputs lie on exactly one hyperplane, given by
\begin{align}\label{system1}
\frac{q_{1}}{p_{1}} +\frac{q_{2}}{p_{2}} = \tau(1-I_{1,2}).
\end{align}
where $I_{1,2}$ is the expected idle time in a frame after delivering both packets.
The MWDF policy compares $\frac{d_{1} (t)}{p_{1}}$ and $\frac{d_{1} (t)}{p_{1}}$ and serves them in the decreasing order. As before we will consider a Lyapunov function to show~(\ref{eq:pf2}) .
\begin{lemma}\label{lemma8}
For the system~(\ref{system1}) under the MWDF policy~(\ref{eq:pf2}) holds true.
\end{lemma}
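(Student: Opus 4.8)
The plan is to repeat, almost verbatim, the argument used for Lemma~\ref{lemma2}: take the Lyapunov function $Z(\bm{d}) = \left|\frac{d_{2}}{p_{2}}-\frac{d_{1}}{p_{1}}\right|$, verify the drift conditions $\mathcal{C}_{1}$ and $\mathcal{C}_{2}$ of Lemma~\ref{lemma1} for the Markov chain $\{\bm{d}(t)\}$, conclude a uniform exponential-moment bound $\limsup_{t\to\infty}E[\exp^{\theta^{\star}Z(\bm{d}(t))}]\leq C^{\star}$, and then convert this into the almost-sure statement~(\ref{eq:pf2}) via a Borel--Cantelli argument that uses $\phi(t)>\sqrt{t}$. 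The only genuinely new ingredient relative to Lemma~\ref{lemma2} is that, with $\tau>1$, the one-step drift must be computed over an entire frame of $\tau$ slots, during which MWDF keeps a fixed priority order.

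First I would fix $\kappa$ strictly larger than the largest possible change of $\frac{d_{2}}{p_{2}}-\frac{d_{1}}{p_{1}}$ over one frame; since each $d_{j}$ changes over a frame by $q_{j}$ minus an indicator, this change is at most $\frac{q_{1}}{p_{1}}+\frac{q_{2}}{p_{2}}+\frac{1}{p_{1}}+\frac{1}{p_{2}}\leq \tau+\frac{1}{p_{1}}+\frac{1}{p_{2}}$, so $\kappa:=1+\tau+\frac{1}{p_{1}}+\frac{1}{p_{2}}$ suffices. When $Z(\bm{d}(t))>\kappa$ the weighted debts are strictly ordered, so MWDF serves (say) client $2$ first throughout frame $t$ and the sign of $\frac{d_{2}}{p_{2}}-\frac{d_{1}}{p_{1}}$ is preserved at the next frame boundary, so $\Delta Z$ equals the signed increment. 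Letting $\gamma_{j}$ be the geometric$(p_{j})$ attempt count and $P_{j}:=1-(1-p_{j})^{\tau}$, client $2$'s packet is delivered with probability $P_{2}$ and occupies $\min(\gamma_{2},\tau)$ slots, where $E[\min(\gamma_{2},\tau)]=P_{2}/p_{2}$; client $1$ then has $(\tau-\gamma_{2})^{+}$ slots, and by the memoryless property $E[\text{slots used by client }1]=P(\text{client }1\text{ delivered})/p_{1}$. Since the slots used by the two clients add up to $\min(\gamma_{1}+\gamma_{2},\tau)$, whose mean is $\tau(1-I_{1,2})$ by~(\ref{idle}), I get $P(\text{client }1\text{ delivered})/p_{1}=\tau(1-I_{1,2})-P_{2}/p_{2}$. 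Substituting into $E[\Delta Z\,|\,\bm{d}]=\frac{q_{2}-P_{2}}{p_{2}}-\frac{q_{1}-P(\text{client }1\text{ delivered})}{p_{1}}$ and using $\tau(1-I_{1,2})=\frac{q_{1}}{p_{1}}+\frac{q_{2}}{p_{2}}$ from~(\ref{system1}) collapses everything to $E[\Delta Z\,|\,\bm{d}]=\frac{2}{p_{2}}(q_{2}-P_{2})$, and symmetrically $\frac{2}{p_{1}}(q_{1}-P_{1})$ when client $1$ has priority.

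It remains to see these drifts are strictly negative. Feasibility of the singleton $\{j\}$ in~(\ref{feasible}) forces $\frac{q_{j}}{p_{j}}\leq 1-I_{\{j\}}=\frac{1}{\tau}E[\min(\gamma_{j},\tau)]=\frac{P_{j}}{\tau p_{j}}$, hence $q_{j}\leq P_{j}/\tau\leq P_{j}$, and the inequality is strict because under the hypothesis of the lemma the singleton constraint is slack (only~(\ref{system1}) is active). So Condition $\mathcal{C}_{1}$ holds with $\eta=\min_{j}\frac{2}{p_{j}}(P_{j}-q_{j})>0$, while Condition $\mathcal{C}_{2}$ is immediate since $|\Delta Z|$ is bounded by the largest per-frame change of $\frac{d_{2}}{p_{2}}-\frac{d_{1}}{p_{1}}$, which is finite. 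Lemma~\ref{lemma1} then gives $\theta^{\star}>0$ and $C^{\star}<\infty$ with $\limsup_{t\to\infty}E[\exp^{\theta^{\star}Z(\bm{d}(t))}]\leq C^{\star}$, and the proof closes exactly as in Lemma~\ref{lemma2}: for every $\delta>0$, Markov's inequality together with $\phi(t)>\sqrt{t}$ gives $\sum_{t}\Pr\!\left(Z(\bm{d}(t))>\delta\phi(t)\right)<\infty$, so by Borel--Cantelli $\limsup_{t\to\infty}Z(\bm{d}(t))/\phi(t)\leq\delta$ a.s., and letting $\delta\downarrow 0$ yields~(\ref{eq:pf2}).

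The step I expect to be the main obstacle is the per-frame drift computation, in particular justifying the identity $P(\text{client }1\text{ delivered})/p_{1}=\tau(1-I_{1,2})-P_{2}/p_{2}$ cleanly through the memoryless property of the geometric attempt counts, and then extracting \emph{strict} negativity of the drift from the observation that, among the constraints~(\ref{feasible}), only~(\ref{system1}) is active. A secondary point, exactly as in the $\tau=1$ case, is to check that $\{\bm{d}(t)\}$ is a well-defined irreducible aperiodic Markov chain on a countable state space so that Lemma~\ref{lemma1} is applicable.
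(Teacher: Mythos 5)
Your proposal is correct and follows essentially the same route as the paper: the same Lyapunov function $Z(\bm{d})=\left|\frac{d_{2}}{p_{2}}-\frac{d_{1}}{p_{1}}\right|$, strictly negative per-frame drift when $Z>\kappa$ (using that only~(\ref{system1}) is active, so the singleton constraints are slack), bounded increments, then Lemma~\ref{lemma1} followed by Markov's inequality and Borel--Cantelli to get~(\ref{eq:pf2}). The only difference is cosmetic: you derive the per-frame delivery identities $\frac{\pi_{2}}{p_{2}}=\tau\left(1-I_{\left\{2\right\}}\right)$ and $\frac{\pi_{1}}{p_{1}}+\frac{\pi_{2}}{p_{2}}=\tau\left(1-I_{\left\{1,2\right\}}\right)$ from first principles via memorylessness/Wald, where the paper cites~\cite{c3}, and your resulting drift $\frac{2}{p_{2}}\left(q_{2}-P_{2}\right)$ is exactly the paper's $2\left(\frac{q_{2}}{p_{2}}-\tau\left(1-I_{\left\{2\right\}}\right)\right)$.
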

\begin{proof} Let $Z(\bm{d})  =| \frac{d_{2} }{p_{2}}- \frac{ d_{1} }{p_{1}}| $. Let $\kappa = 1+ \frac{2}{\min(p_{1},p_{2})}$. Let $\frac{d_{2}(t) }{p_{2}}- \frac{ d_{1}(t) }{p_{1}} > \kappa$. Then, the order $\{2,1\}$ is used in frame $t+1$. Let $\pi_{2}$ and $\pi_{1}$ denote the probabilities that packets for respective users are delivered under the ordering $\{2,1\}$. It has been shown in~\cite{c3} that, 
\begin{align*}
\frac{\pi_{2}}{p_{2}} &= \tau(1-I_{\left\{2\right\}}),\\
\frac{\pi_{2}}{p_{2}} + \frac{\pi_{1}}{p_{1}}& = \tau(1-I_{\left\{1,2\right\}}).\\
\end{align*} 
For $\bm{d}$ such that $\frac{d_{2}(t)}{p_{2}} - \frac{d_{1}(t)}{p_{1}}>\kappa$, we have
\begin{align*}
&E\left( \Delta Z\left(\bm{d}\right) | \bm{d}  \right) \\
=& E\left[ \frac{d_{2}(t+1) }{p_{2}}- \frac{d_{1}(t+1)}{p_{1}} -\left(\frac{d_{2}(t) }{p_{2}}- \frac{d_{1}(t)}{p_{1}} \right)\Bigg| \mbox{order }\{2,1\}  \right]\\
=&  E\Big[ \left(\frac{d_{2}(t+1) }{p_{2}}-\frac{d_{2}(t) }{p_{2}}\right) \\
&\qquad- \left(\frac{d_{1}(t+1)}{p_{1}}- \frac{d_{1}(t)}{p_{1}} \right)\Big|\mbox{order } \{2,1\}  \Big]\\
=& \frac{q_{2}}{p_{2}} - \frac{\pi_{2}}{p_{2}} -\left(\frac{q_{1}}{p_{1}} - \frac{\pi_{1}}{p_{1}} \right)\\
=& \frac{q_{2}}{p_{2}} -2\frac{\pi_{2}}{p_{2}}-\frac{q_{1}}{p_{1}}+\frac{\pi_{1}}{p_{1}}+\frac{\pi_{2}}{p_{2}}\\
=&\frac{q_{2}}{p_{2}}-2\tau\left( 1-I_{\left\{2\right\}}\right)-\frac{q_{1}}{p_{1}} + \tau\left(1-I_{\left\{1,2\right\}}\right)\\
=&2\frac{q_{2}}{p_{2}}-2\tau\left( 1-I_{\left\{2\right\}}\right)-\left(\frac{q_{1}}{p_{1}}+\frac{q_{2}}{p_{2}}\right) + \tau\left(1-I_{\left\{1,2\right\}}\right)\\
=& 2\frac{q_{2}}{p_{2}}-2\tau\left( 1-I_{\left\{2\right\}}\right)-\tau\left(1-I_{\left\{1,2\right\}}\right) + \tau\left(1-I_{\left\{1,2\right\}}\right)\\
=&2\left( \frac{q_{2}}{p_{2}} - \tau\left(1-I_{\left\{2\right\}}\right)   \right)\\
<&0.
\end{align*}
Similarly 
\begin{align*}
E( \Delta Z(\bm{d}) | \bm{d}  )<0,
\end{align*}
 when $\frac{d_{1}(t)}{p_{1}} - \frac{d_{2}(t)}{p_{2}}>\kappa$. 
So we can take $\eta = \max{2\left( \frac{q_{2}}{p_{2}} - \left(1-I_{\left\{2\right\}}\right)   \right), 2\left( \frac{q_{1}}{p_{1}} - \left(1-I_{\left\{1\right\}}\right)   \right)}$.\\ 
We need to show that the drift is uniformly bounded. As in previous section, once again if $ \frac{d_{2} }{p_{2}}- \frac{d_{1}}{p_{1}} > \kappa$, drift is bounded simply by 
\begin{align*}
\frac{1+q_{1}}{p_{1}} + \frac{1+q_{2}}{p_{2}} .
\end{align*}
If $ \frac{d_{2} }{p_{2}}- \frac{d_{1}}{p_{1}} < \kappa$, then once again,
\begin{align*}
&\left| \frac{d_{2}(t+1)}{p_{2}}	- \frac{d_{1}(t+1)}{p_{1}}\right| - \left| \frac{d_{2}(t)}{p_{2}}	- \frac{d_{1}(t)}{p_{1}}\right| \\
&\leq \frac{1+q_{2}}{p_{2}} + \frac{1+q_{1}}{p_{1}} +2\kappa
\end{align*}
For the remaining part, we can once again apply Borel Cantelli to get the claim.
\end{proof}
Let $i(t)$ denote the time spent idling in frame $t$. Note that $i(t)$ are independent and identically distributed with mean $\tau I_{\left\{1,2\right\}}$ and finite variance, which will be denoted as $\sigma^{2}_{I}$. Next we state without proof the results similar to previous section. The derivation of the first result is exactly the same as in the previous section, while the second follows using Kolmogorov's Law of the Iterated Logarithm.
\begin{lemma}
\begin{enumerate}
\item 
\begin{align*}
 \limsup_{T\to\infty} \frac{M_{j}(t)}{\phi(t)} = \sqrt{c_{j}}
\end{align*},
\item 
\begin{align*}
\limsup_{t\to\infty} \frac{\sum_{l=1}^{t} \left(i(l) -\tau I_{1,2}\right)}{\phi(t)} = \sigma_{I}.
\end{align*}
\end{enumerate}

\end{lemma}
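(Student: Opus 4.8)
The plan is to treat the two parts separately, reusing the machinery of Section~\ref{sec3}.

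For part (i) I would re-run the proof of Lemma~\ref{lemma3} essentially verbatim. The process $M_{j}(t)=\sum_{l=1}^{t}m_{j}(l)$ is still a martingale with uniformly bounded increments, so it suffices to re-establish items (i)--(iv) of Lemma~\ref{lemma3} in the present setting and then invoke Theorem~\ref{theorem:stout} exactly as there, obtaining~(\ref{eq:pf7}). Item (i), namely $E[m_{j}^{2}(t)\mid\mathcal{F}_{t-1}]=u_{j}(t)\,v_{j}$ with $v_{j}=(1-p_{j})/p_{j}$, follows as before by conditioning on $\mathcal{F}_{t-1}$ and on whether user $j$ is attempted in slot $t$: given an attempt, $g_{j}(t)$ is $\mathrm{Bernoulli}(p_{j})$ and a one-line computation gives the conditional second moment $v_{j}$, while with no attempt the increment is $0$. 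Item (ii) is then the obvious partial sum. For item (iii) one needs $\frac{1}{T}\sum_{t=1}^{T}u_{j}(t)\to q_{j}/p_{j}$ almost surely; this is the analogue of Lemma~\ref{lemma2.5}, and its proof is unchanged: the martingale strong law (Theorem~\ref{llnmartingale}) gives $M_{j}(T)/T\to 0$, and because $(q_{1},q_{2})$ lies on the boundary hyperplane~(\ref{system1}) the sum of the per-user rates $q_{j}/p_{j}$ exactly exhausts the available work $\tau(1-I_{1,2})$, forcing each of the $\liminf$ lower bounds $q_{j}/p_{j}$ to be attained with equality. Item (iv) follows immediately, so $\mathcal{S}_{j}^{2}(T)\to\infty$ and Theorem~\ref{theorem:stout}, applied with the $\mathcal{F}_{t-1}$-measurable coefficients $K_{n}=\sqrt{\log\log n}/\mathcal{S}_{j}^{2}(n)\to 0$, yields $\limsup_{t\to\infty}M_{j}(t)/\phi(t)=\sqrt{c_{j}}$.

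For part (ii) the key observation is that the idle time in frame $l$ can be written as $i(l)=[\tau-\gamma_{1}(l)-\gamma_{2}(l)]^{+}$, where $\gamma_{j}(l)$ is the geometric$(p_{j})$ number of attempts a packet of user $j$ would require given the channel realizations in that frame. This representation does not depend on the order MWDF picks: any non-idling policy serving the two users keeps the channel busy until both packets of the frame are delivered or the frame ends, and in either case the busy time in frame $l$ equals $\min(\gamma_{1}(l)+\gamma_{2}(l),\tau)$, so the idle time equals $\tau-\min(\gamma_{1}(l)+\gamma_{2}(l),\tau)=[\tau-\gamma_{1}(l)-\gamma_{2}(l)]^{+}$. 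Hence $\{i(l)\}_{l\ge 1}$ is i.i.d., bounded in $[0,\tau]$, with mean $\tau I_{1,2}$ by the definition~(\ref{idle}) of the idle fraction and finite variance $\sigma_{I}^{2}$. Kolmogorov's classical law of the iterated logarithm for i.i.d. summands with finite variance then gives $\limsup_{t\to\infty}\big(\sum_{l=1}^{t}(i(l)-\tau I_{1,2})\big)\big/\sqrt{2\sigma_{I}^{2}t\log\log t}=1$ almost surely, which is exactly $\limsup_{t\to\infty}\big(\sum_{l=1}^{t}(i(l)-\tau I_{1,2})\big)/\phi(t)=\sigma_{I}$ since $\phi(t)=\sqrt{2t\log\log t}$.

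The step I expect to be the main obstacle is verifying the hypotheses of Theorem~\ref{theorem:stout} in part (i). One must make sure the attempt indicator $u_{j}(t)$ is predictable with respect to the filtration being used, so that item (i) of Lemma~\ref{lemma3} holds as an equality rather than merely in conditional expectation; for $\tau>1$ this may require working at the frame level, aggregating $m_{j}$ over a whole frame, or enlarging $\mathcal{F}_{t-1}$ to carry the within-frame schedule. One must also check that the boundedness of the increments together with $\mathcal{S}_{j}^{2}(t)\sim c_{j}t$ does admit an $\mathcal{F}_{t-1}$-measurable sequence $K_{t}\to 0$ meeting Stout's increment condition. Underpinning all of this is the almost-sure linear growth $\mathcal{S}_{j}^{2}(T)\sim c_{j}T$, which rests on the timely throughput being met with equality because $(q_{1},q_{2})$ sits on~(\ref{system1}); once that is secured, the remainder is the bookkeeping already carried out for the $\tau=1$ case.
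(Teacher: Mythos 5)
Your proposal follows exactly the route the paper indicates: it states this lemma without proof, remarking that part (i) is derived exactly as in Section~\ref{sec3} (the martingale argument of Lemma~\ref{lemma3} via Theorem~\ref{theorem:stout}) and that part (ii) follows from Kolmogorov's law of the iterated logarithm applied to the i.i.d.\ idle times $i(l)$ with mean $\tau I_{\{1,2\}}$ and variance $\sigma_I^2$. Your write-up is correct and in fact supplies details the paper omits, such as the representation $i(l)=[\tau-\gamma_1(l)-\gamma_2(l)]^{+}$ justifying the i.i.d.\ claim and the predictability/normalization caveats for the general frame length.
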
 
\begin{theorem}\label{theorem:main1}
Under the MDF policy, 
\begin{align}
\limsup_{t\to\infty}\frac{d_{j}(t)}{\phi(t)} \leq p_{j} \frac{\sqrt{c_{1}} + \sqrt{c_{2}} + \sigma_{I} }{2} .
\end{align}
\end{theorem}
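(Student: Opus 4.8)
The plan is to repeat, essentially verbatim, the argument used for Theorem~\ref{theorem:main}, the only genuinely new feature being that with a general frame length $\tau$ the idle time in a frame is no longer zero, so a centered idle-time term enters the debt identity. Concretely, I would first establish the sample-path identity
\begin{align*}
\frac{d_1(t)}{p_1}+\frac{d_2(t)}{p_2}=M_1(t)+M_2(t)+\sum_{l=1}^{t}\bigl(i(l)-\tau I_{1,2}\bigr),
\end{align*}
which is the analogue of the identity $M_1(t)+M_2(t)=\frac{d_1(t)}{p_1}+\frac{d_2(t)}{p_2}$ from the proof of Theorem~\ref{theorem:main}, corrected by the fluctuation of the idle time about its mean.

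To get this identity I would sum the per-slot increments $m_j(\cdot)$ of~(\ref{def:martin}) over every slot in the first $t$ frames, obtaining $M_1(t)+M_2(t)=A(t)-\frac{s_1(t)}{p_1}-\frac{s_2(t)}{p_2}$, where $A(t)$ is the total number of transmission attempts made in those frames and $s_j(t)$ is the number of packets delivered to user $j$ in them (the quantity appearing in the definition of $d_j$). Since MWDF is non-idling, a slot of a frame is left unused only after both packets of that frame have been delivered, so $A(t)=t\tau-\sum_{l=1}^{t}i(l)$ holds pathwise. Splitting $\sum_{l=1}^{t}i(l)=t\tau I_{1,2}+\sum_{l=1}^{t}(i(l)-\tau I_{1,2})$ and using the hyperplane relation~(\ref{system1}) to rewrite $t\tau(1-I_{1,2})$ as $t(\frac{q_1}{p_1}+\frac{q_2}{p_2})$ collapses the remaining terms into $\frac{tq_1-s_1(t)}{p_1}+\frac{tq_2-s_2(t)}{p_2}=\frac{d_1(t)}{p_1}+\frac{d_2(t)}{p_2}$, which is the displayed identity.

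Given the identity, I would conclude exactly as in Theorem~\ref{theorem:main}: write $2\frac{d_1(t)}{p_1}=\bigl(\frac{d_1(t)}{p_1}+\frac{d_2(t)}{p_2}\bigr)-\bigl(\frac{d_2(t)}{p_2}-\frac{d_1(t)}{p_1}\bigr)$, substitute the identity, divide by $\phi(t)$, and take $\limsup_{t\to\infty}$. Subadditivity of $\limsup$ bounds the result by $\limsup\frac{M_1(t)}{\phi(t)}+\limsup\frac{M_2(t)}{\phi(t)}+\limsup\frac{1}{\phi(t)}\sum_{l=1}^{t}(i(l)-\tau I_{1,2})+\limsup\frac{1}{\phi(t)}\bigl(\frac{d_1(t)}{p_1}-\frac{d_2(t)}{p_2}\bigr)$. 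By Lemma~\ref{lemma8} the last term vanishes a.s.; by the first assertion of the lemma preceding the theorem the first two equal $\sqrt{c_1}$ and $\sqrt{c_2}$; by its second assertion the third equals $\sigma_I$. Hence $\limsup_{t\to\infty}\frac{2}{\phi(t)}\frac{d_1(t)}{p_1}\le\sqrt{c_1}+\sqrt{c_2}+\sigma_I$, which is the claim for $j=1$; the case $j=2$ is identical after writing $2\frac{d_2(t)}{p_2}$ in the symmetric way.

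I expect the main obstacle to be the idle-slot bookkeeping behind the first identity: one must argue carefully that for the two-user system whose demand lies on the single hyperplane~(\ref{system1}), MWDF is non-idling and every idle slot genuinely corresponds to "both packets already delivered," so that $A(t)=t\tau-\sum_l i(l)$ is an exact pathwise equality; one also needs the i.i.d.-ness of the $i(l)$ with mean $\tau I_{1,2}$ and finite variance $\sigma_I^2$ asserted just before the lemma, since that is precisely what powers Kolmogorov's LIL for $\sum_l(i(l)-\tau I_{1,2})$ and hence the lemma's second part. A minor point is that the cross term must be controlled from above irrespective of sign; Lemma~\ref{lemma8} delivers this because it gives $\frac{1}{\phi(t)}\left|\frac{d_2(t)}{p_2}-\frac{d_1(t)}{p_1}\right|\to 0$, so its $\limsup$ contribution is zero. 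Everything past the identity is the same subadditive-$\limsup$ computation already carried out for Theorem~\ref{theorem:main}.
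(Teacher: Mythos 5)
Your proposal is correct and takes essentially the route the paper intends: the paper states Theorem~\ref{theorem:main1} without proof, and your idle-time-corrected identity $\frac{d_1(t)}{p_1}+\frac{d_2(t)}{p_2}=M_1(t)+M_2(t)+\sum_{l=1}^{t}\left(i(l)-\tau I_{1,2}\right)$ followed by the subadditive $\limsup$ step is exactly the argument of Theorem~\ref{theorem:main} augmented by the two assertions of the preceding lemma (the martingale LIL giving $\sqrt{c_j}$ and Kolmogorov's LIL for the centered idle times giving $\sigma_I$) together with Lemma~\ref{lemma8} for the vanishing cross term. Your bookkeeping that a non-idling policy yields $A(t)=t\tau-\sum_{l=1}^{t}i(l)$ pathwise, combined with the hyperplane relation~(\ref{system1}), is precisely the missing detail the paper glosses over, so nothing further is needed.
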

\section{Multiple Users With $\tau $=1}\label{sec5}
In this section we consider the case of $N$ users with frame size of one time-slot.
Since there is no idling in this case, the rate region is given by,
\begin{align}\label{rateregion}
\sum_{j=1}^{N}\frac{q_{j}}{p_{j}} \leq 1.
\end{align}
For the remainder of this section $\bm{q}$ will denote the vector $(q_{1},\ldots,q_{N}  )$, and it will be assumed to lie on the hyperplane defined by equation in~(\ref{rateregion}). First we will show that, 
\begin{align}\label{eq:sec3:1}
\lim_{t\to\infty}  \frac{1}{\phi(t)}\left|\frac{d_{j}(t)}{p_{j}}-\frac{d_{k}(t)}{p_{k}}\right| = 0\mbox{ a.s.},\forall j,k= 1,\ldots,N. 
\end{align}
We introduce the following Lyapunov function,
\begin{align}\label{sec:3:lyapunov}
Z(\bm{x}) = \sum_{j=1}^{N}(x_{j}-x_{\min}),
\end{align}
where $x_{\min}$ is the minimum entry of $\bm{x}$. Let $c: = 1+ \max_{j=1,\ldots,N} \frac{1+2q_{j}}{p_{j}} $.
\begin{lemma}\label{sec:3:lemma1}
 Under the MWDF policy~(\ref{eq:sec3:1}) holds true.
\end{lemma}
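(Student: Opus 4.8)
The plan is to adapt the two‑user Lyapunov argument of Lemma~\ref{lemma2} (and Lemma~\ref{lemma8}) to the function $Z(\bm{x})=\sum_{j=1}^{N}(x_{j}-x_{\min})$ acting on the rescaled debt vector $\bm{x}(t)=\bigl(\tfrac{d_{1}(t)}{p_{1}},\ldots,\tfrac{d_{N}(t)}{p_{N}}\bigr)$, and then invoke Lemma~\ref{lemma1} together with Borel--Cantelli exactly as before. First I would verify Condition~$\mathcal{C}_{2}$: in a single slot each coordinate $\tfrac{d_{j}(t)}{p_{j}}$ changes by at most $\tfrac{1+q_{j}}{p_{j}}$ in absolute value, hence $x_{\min}$ changes by at most $\max_{j}\tfrac{1+q_{j}}{p_{j}}$, and therefore $|\Delta Z|\le 2N\max_{j}\tfrac{1+q_{j}}{p_{j}}=:D<\infty$, which gives the uniform bound on the drift.

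The heart of the proof is Condition~$\mathcal{C}_{1}$, the negative drift of $Z$ when $Z(\bm{x}(t))\ge\kappa$ for a suitable threshold $\kappa$ (the constant $c$ defined just before the lemma plays this role, or a small multiple of it). The key structural observation, as in the earlier lemmas, is that when the spread is large the MWDF priority order is \emph{stable}: if coordinate $k$ is the current minimum and the gap $x_{j}(t)-x_{k}(t)$ exceeds the one‑slot oscillation bound $\tfrac{1+q_{j}}{p_{j}}+\tfrac{1+q_{k}}{p_{k}}$ for all $j$, then after one slot the same $k$ is still the minimum, so $Z$ telescopes cleanly. Under the fixed order the served client — the one with the largest $x_{j}$ — gets its packet with probability $p_{j}$, so $E\bigl[x_{j}(t+1)-x_{j}(t)\mid\mathcal{F}_{t}\bigr]=\tfrac{q_{j}}{p_{j}}-1$, while every other client $i$ has $E\bigl[x_{i}(t+1)-x_{i}(t)\mid\mathcal{F}_{t}\bigr]=\tfrac{q_{i}}{p_{i}}$ (and in particular the minimum $x_{k}$ drifts up by $\tfrac{q_{k}}{p_{k}}>0$). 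Summing, $E[\Delta Z\mid\mathcal{F}_{t}]=\sum_{i}\tfrac{q_{i}}{p_{i}}-1-N\,\tfrac{q_{k}}{p_{k}}$; since $\bm{q}$ lies on the hyperplane $\sum_{i}\tfrac{q_{i}}{p_{i}}=1$, this collapses to $-N\,\tfrac{q_{k}}{p_{k}}\le-N\min_{j}\tfrac{q_{j}}{p_{j}}<0$, giving $\eta=N\min_{j}\tfrac{q_{j}}{p_{j}}$.

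The main obstacle is the \emph{boundary case}: when $Z(\bm{x}(t))\ge\kappa$ but the coordinates are not well separated pairwise — e.g. several clients are clustered near the top and/or near the bottom so that the identity of the arg‑max or the arg‑min can change within a slot, and $Z$ no longer telescopes. To handle this I would argue that even then the large‑$Z$ hypothesis forces at least one pair to be widely separated, so there is a client near the maximum that is definitely served (its debt cannot be overtaken in one slot); its expected decrease $-1+\tfrac{q_{j}}{p_{j}}$ together with the bounded upward movement of the others still yields a strictly negative bound, possibly after enlarging $\kappa$. A cleaner alternative, which I would try first, is to bound $Z$ from above by $\sum_{j}\sum_{k}\bigl|x_{j}-x_{k}\bigr|$ up to a constant and reduce to the \emph{pairwise} statements $\limsup_t\tfrac{1}{\phi(t)}|x_{j}(t)-x_{k}(t)|=0$; each pairwise gap can be controlled by the same one‑dimensional drift computation as in Lemma~\ref{lemma2}, using that whenever $x_{j}-x_{k}$ is large the MWDF order serves $j$ before $k$, so $|x_{j}-x_{k}|$ has negative conditional drift.

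Once Condition~$\mathcal{C}_{1}$ and Condition~$\mathcal{C}_{2}$ are in place, Lemma~\ref{lemma1} gives $\theta^{\star}>0$, $C^{\star}<\infty$ with $\limsup_{t\to\infty}E\bigl[e^{\theta^{\star}Z(\bm{x}(t))}\bigr]\le C^{\star}$. Then, as in the proof of Lemma~\ref{lemma2}, for any $\delta>0$ the Markov inequality gives $\Pr\!\bigl(Z(\bm{x}(t))>\delta\phi(t)\bigr)\le (C^{\star}+\epsilon)\,e^{-\theta^{\star}\delta\phi(t)}\le (C^{\star}+\epsilon)\,e^{-\theta^{\star}\delta\sqrt{t}}$ for $t$ large, whose sum over $t$ is finite; Borel--Cantelli then yields $\tfrac{Z(\bm{x}(t))}{\phi(t)}\to0$ a.s. Since $|x_{j}(t)-x_{k}(t)|\le Z(\bm{x}(t))$ for every pair $j,k$, this gives~(\ref{eq:sec3:1}) and completes the proof.
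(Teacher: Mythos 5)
Your overall architecture coincides with the paper's: the same Lyapunov function $Z(\bm{x})=\sum_{j}(x_{j}-x_{\min})$ applied to the weighted debt vector, Lemma~\ref{lemma1} followed by Markov's inequality and Borel--Cantelli, and the final observation $|x_{j}-x_{k}|\le Z(\bm{x})$. Your Condition~$\mathcal{C}_{2}$ bound and your drift computation in the ``stable-minimum'' case are correct and give the same $\eta = N\min_{j}q_{j}/p_{j}$ as the paper. The problem is your resolution of what you identify as the main obstacle. The ``cleaner alternative'' you say you would try first --- reducing to pairwise gaps $|x_{j}-x_{k}|$ and rerunning the two-user drift argument of Lemma~\ref{lemma2} --- fails for $N>2$: with one slot per frame, whenever a third client $m$ holds the largest weighted debt, neither $j$ nor $k$ is attempted, so $x_{j}-x_{k}$ changes \emph{deterministically} by $q_{j}/p_{j}-q_{k}/p_{k}$, which can be positive no matter how large the gap is; Condition~$\mathcal{C}_{1}$ therefore does not hold for the pairwise function, and this is precisely why the aggregate $Z$ (and not pairwise differences) must be used. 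Your first fallback is also aimed at the wrong mechanism: the served client's expected decrease $-1+q_{j}/p_{j}$ cannot be the source of negativity, because on the hyperplane $\sum_{i}q_{i}/p_{i}=1$ the total $\sum_{j}E[\,x_{j}(t+1)-x_{j}(t)\mid\mathcal{F}_{t}\,]$ is exactly $0$; all of the negative drift of $Z$ must come from the term $-N\bigl(x_{\min}(t+1)-x_{\min}(t)\bigr)$.

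The correct resolution, which is the paper's and which makes your ``boundary case'' a non-issue, is simpler than either of your suggestions: once $Z>\kappa:=Nc$ one has $x_{\max}-x_{\min}>c$, and $c$ exceeds the one-slot swing, so the single served client (the argmax) cannot become the minimum at $t+1$. All unserved coordinates move deterministically by $q_{j}/p_{j}$, so the new minimum is attained by an unserved client $j_{t+1}$ that is $\mathcal{F}_{t}$-measurable and satisfies $x_{j_{t+1}}(t)\ge x_{\min}(t)$; hence $x_{\min}(t+1)\ge x_{\min}(t)+\min_{j}q_{j}/p_{j}$ deterministically, whether or not the identity of the argmin changes. Combined with the zero-mean total change this yields $E[\Delta Z\mid\mathcal{F}_{t}]\le -N\min_{j}q_{j}/p_{j}$ on all of $\{Z>Nc\}$, which is exactly the paper's inequality~(\ref{sec3:lema11}). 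With that replacement for your boundary-case discussion, the rest of your argument (Condition~$\mathcal{C}_{2}$, Lemma~\ref{lemma1}, Borel--Cantelli, and $|x_{j}-x_{k}|\le Z$) goes through as written.
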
 
\begin{proof}
 Consider $Z(\bm{x}) $ as in~(\ref{sec:3:lyapunov}). We will use the notation $\bm{\frac{d(t)}{p}}$ to refer to the vector with entries $\frac{d_{j}(t)}{p_{j}}$. Note that
\begin{align*}
Z(\bm{x})>Nc &\implies \sum_{j=1}^{N}(x_{j}-x_{\min}) >Nc\\
&\implies N(x_{\max}-x_{\min}) >Nc\\
&\implies (x_{\max}-x_{\min}) >c.
\end{align*}
Suppose now that 
\begin{align*}
Z\left(\bm{\frac{d(t)}{p}}\right)>Nc.
\end{align*}
Then,
\begin{align*}
\left(\bm{\frac{d(t)}{p}}\right)_{\max} >c.
\end{align*}
Let $j\in \arg\max \frac{d_{j}(t) }{p_{j}} $. Then,
\begin{align*}
\frac{d_{j}(t+1) }{p_{j}} -\left[\bm{\frac{d(t+1)}{p}}\right]_{\min}&>   \frac{d_{j}(t) }{p_{j}} - \frac{1+q_{j}}{p_{j}}\\
& - \left[\left(\bm{\frac{d(t)}{p}}\right)_{\min} +\max_{j=1,\ldots,N}\frac{q_{j}}{p_{j}} \right]\\
&>0,
\end{align*}
where the inequalities are the result of the facts that the client being served cannot decrease in debt by more than $-1+q_{j}$, and the maximum increment in the minimum value can be $\max_{j=1,\ldots,N}\frac{q_{j}}{p_{j}} $.
Set $\kappa = Nc$. Also note that since the frame size is one slot, only one user $j\in \arg\max \frac{d_{j}(t) }{p_{j}} $ will be receiving service in the present frame and hence the debts of users at time $t+1$, other than this maximum weighted debt user, are known at time $t$ itself. Also if $Z\left(\bm{\frac{d(t)}{p}}\right)>Nc$ we know that the max weighted debt user will not be the minimum weighted debt user in the next frame. We will use $Z_{t+1}$ to mean $Z(\frac{\bm{d}(t+1)}{\bm{p}})$ and $j_{t},j_{t+1}$ to refer to the indices of the clients with the least weighted debt at times $t$ and $t+1$ respectively, and by $j^{\max} $ to the index of the client with highest weighted debt at the beginning of frame $t+1$. This gives us, for $Z_{t}>Nc$,
\begin{align}\label{sec:3:lypn1}
& E\left[ Z_{t+1} |   Z_{t}   \right] = \sum_{j\neq j_{t+1},j^{\max}} \left(\frac{d_{j}(t)}{p_{j}} + \frac{q_{j}}{p_{j}}\right) \notag\\
&-\left(\frac{d_{j_{t+1}}(t)}{p_{j_{t+1}}} + \frac{q_{j_{t+1}}}{p_{j_{t+1}}}\right)+ \left(\frac{d_{j^{\max}}(t)}{p_{j^{\max}}} + \frac{q_{j^{\max}}}{p_{j^{\max}}} -1\right)  \notag\\
&\qquad -\left(\frac{d_{j_{t+1}}(t)}{p_{j_{t+1}}} + \frac{q_{j_{t+1}}}{p_{j_{t+1}}}\right)
\end{align}
Also,
\begin{align}\label{sec:3:lypn2}
 & Z_{t} = \sum_{j\neq j_{t},j^{\max}} \left(\frac{d_{j}(t)}{p_{j}} \right)  -\left(\frac{d_{j_{t}}(t)}{p_{j_{t}}} \right)\notag\\
&\qquad+ \left(\frac{d_{j^{\max}}(t)}{p_{j^{\max}}} \right)  -\left(\frac{d_{j_{t}}(t)}{p_{j_{t}}}\right).
\end{align}
Subtracting the above two, and using~(\ref{rateregion}) we get for $d(t) = d$ such that  $Z\left(\frac{\bm{d}(t)}{\bm{p}}  \right)>Nc$,
\begin{align}\label{sec3:lema11}
& E\left[\Delta Z(\bm{d})| \bm{d} \right]\notag\\
=& N\left[  \frac{d_{j_{t}}(t)}{p_{j_{t}}} - \frac{d_{j_{t+1}}(t)}{p_{j_{t+1}}} - \frac{q_{j_{t+1}}}{p_{j_{t+1}}}  \right]\notag\\
<& 0.
\end{align}
In the last inequality we have used the fact,
\begin{align}\label{fact1}
\frac{d_{j_{t+1}}(t)}{p_{j_{t+1}}} \geq  \frac{d_{j_{t}}(t)}{p_{j_{t}}}
\end{align}
and also that,
\begin{align}
 \frac{d_{j_{t}}(t)}{p_{j_{t}}} + \frac{q_{j_{t}}}{p_{j_{t}}} \geq  \frac{d_{j_{t+1}}(t)}{p_{j_{t}+1}} + \frac{q_{j_{t+1}}}{p_{j_{t+1}}},
 \end{align}
and hence
 \begin{align}\label{fact2}
 \frac{d_{j_{t}}(t)}{p_{j_{t}}} -\frac{d_{j_{t+1}}(t)}{p_{j_{t}+1}} \geq  \frac{q_{j_{t+1}}}{p_{j_{t+1}}} - \frac{q_{j_{t}}}{p_{j_{t}}}.
  \end{align}
Combining~(\ref{fact1}) and~(\ref{fact2}), when $Z\left(\frac{\bm{d}(t)}{\bm{p}}\right)>Nc$  
\begin{align}
-N\left(\frac{q}{p}\right)_{\max}\leq E\left[\Delta Z(\bm{d})| \bm{d}\right]\leq -N\left(\frac{q}{p}\right)_{\min}.
\end{align}
The boundedness of drift when $Z(\bm{d})\leq Nc$ follows trivially by boundedness of the increments of individual debts by $\frac{q_{j}+1}{p_{j}}$. Hence we can use Lemma~\ref{lemma1} in conjunction with the Borel Cantelli lemma to infer that
\begin{align*}
\lim_{t\to\infty}  \frac{1}{\phi(t)} \left[\sum_{j=1}^{N}\frac{d_{j}(t)}{p_{j}}-\left(\bm{\frac{d(t)}{p}}\right)_{\min} \right]= 0\mbox{ a.s.}
\end{align*}
Therefore,
\begin{align*}
\lim_{t\to\infty}  \frac{1}{\phi(t)} \left[\frac{d_{j}(t)}{p_{j}}-\frac{d_{k}(t)}{p_{k}}\right] = 0\mbox{ a.s.},\forall j,k.
\end{align*}
\end{proof}
The main part of the rest of the analysis lies in showing that the debt vector evolves along the perpendicular to the hyperplane. The rest of the analysis is exactly along the lines of section~\ref{sec2}, and is omitted. We directly state the main result.
\begin{theorem}\label{theorem:main2}
Under the MWDF policy, 
\begin{align}
\limsup_{t\to\infty}\frac{d_{j}(t)}{\phi(t)} \leq p_{j} \frac{\sum_{j=1}^{N}\sqrt{c_{j}}}{N} .
\end{align}
\end{theorem}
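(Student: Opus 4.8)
The plan is to follow the template of the proof of Theorem~\ref{theorem:main}, replacing the two-term decomposition by an $N$-term one. The first ingredient is the conservation identity. Because $\tau=1$ and there is no idling, exactly one user is attempted in every slot, so $\sum_{k=1}^{N}u_{k}(l)=1$ for each $l$ and hence $\sum_{k=1}^{N}\sum_{l=1}^{t}u_{k}(l)=t$. Combining this with $d_{k}(t)=tq_{k}-s_{k}(t)$, the hyperplane relation $\sum_{k=1}^{N}q_{k}/p_{k}=1$, and the definitions~(\ref{def:martin})--(\ref{def:martin2}), I would first establish
\begin{align*}
\sum_{k=1}^{N}M_{k}(t)=t-\sum_{k=1}^{N}\frac{s_{k}(t)}{p_{k}}=t\sum_{k=1}^{N}\frac{q_{k}}{p_{k}}-\sum_{k=1}^{N}\frac{s_{k}(t)}{p_{k}}=\sum_{k=1}^{N}\frac{d_{k}(t)}{p_{k}},
\end{align*}
the $N$-user analogue of the identity $M_{1}(t)+M_{2}(t)=d_{1}(t)/p_{1}+d_{2}(t)/p_{2}$ established in the proof of Theorem~\ref{theorem:main}.

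Next, fix $j$ and write the trivial decomposition
\begin{align*}
N\,\frac{d_{j}(t)}{p_{j}}=\sum_{k=1}^{N}\frac{d_{k}(t)}{p_{k}}+\sum_{k=1}^{N}\Bigl(\frac{d_{j}(t)}{p_{j}}-\frac{d_{k}(t)}{p_{k}}\Bigr)=\sum_{k=1}^{N}M_{k}(t)-\sum_{k=1}^{N}\Bigl(\frac{d_{k}(t)}{p_{k}}-\frac{d_{j}(t)}{p_{j}}\Bigr).
\end{align*}
Dividing by $\phi(t)$, taking $\limsup_{t\to\infty}$, and using the subadditivity of $\limsup$, I obtain
\begin{align*}
\limsup_{t\to\infty}\frac{N}{\phi(t)}\,\frac{d_{j}(t)}{p_{j}}\le\sum_{k=1}^{N}\limsup_{t\to\infty}\frac{M_{k}(t)}{\phi(t)}+\sum_{k=1}^{N}\limsup_{t\to\infty}\Bigl(-\frac{1}{\phi(t)}\Bigl(\frac{d_{k}(t)}{p_{k}}-\frac{d_{j}(t)}{p_{j}}\Bigr)\Bigr).
\end{align*}
By Lemma~\ref{sec:3:lemma1} each difference $\tfrac{1}{\phi(t)}(d_{k}(t)/p_{k}-d_{j}(t)/p_{j})$ tends to $0$ a.s., so the second sum vanishes. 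By the $N$-user version of Lemma~\ref{lemma3} --- obtained exactly as in Section~\ref{sec3}: under MWDF the boundary vector $\bm q$ is fulfilled, so $\sum_{l\le t}u_{k}(l)/t\to q_{k}/p_{k}$ by Lemma~\ref{lemma2.5}, whence $\mathcal S_{k}^{2}(T)/T\to c_{k}=q_{k}(1-p_{k})/p_{k}^{2}$, the increments $m_{k}(t)$ are bounded, and Stout's condition $K_{n}\to0$ of Theorem~\ref{theorem:stout} holds --- one has $\limsup_{t\to\infty}M_{k}(t)/\phi(t)=\sqrt{c_{k}}$ a.s. Substituting and multiplying by $p_{j}/N$ gives $\limsup_{t\to\infty}d_{j}(t)/\phi(t)\le p_{j}\bigl(\sum_{k=1}^{N}\sqrt{c_{k}}\bigr)/N$, which is the claim.

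The genuine obstacle is the step isolated in Lemma~\ref{sec:3:lemma1}: showing that the scaled debt vector collapses onto the normal of the rate-region face, i.e.\ that $d_{j}(t)/p_{j}-d_{k}(t)/p_{k}=o(\phi(t))$ a.s. for all $j,k$. This is where the geometry of the MWDF rule enters, via the Lyapunov function $Z(\bm x)=\sum_{j}(x_{j}-x_{\min})$, the verification that it has a uniformly negative drift outside a bounded set (for which one must check, as in the derivation of~(\ref{sec3:lema11}), that the maximum-debt user is no longer the minimum-debt user in the next frame), bounded increments, Lemma~\ref{lemma1}, and Borel--Cantelli. A secondary point deserving care is that the estimate is inherently one-sided: $\limsup$ is only subadditive, and the $N$ martingales $M_{k}$ need not attain their iterated-logarithm extremes simultaneously, so only the stated upper bound --- not an equality --- is to be expected.
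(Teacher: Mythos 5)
Your proposal is correct and follows essentially the same route the paper intends: it is the $N$-user replica of the proof of Theorem~\ref{theorem:main}, combining the identity $\sum_{k}M_{k}(t)=\sum_{k}d_{k}(t)/p_{k}$, the state-space collapse of Lemma~\ref{sec:3:lemma1}, and the iterated-logarithm bound $\limsup_{t}M_{k}(t)/\phi(t)=\sqrt{c_{k}}$ from the $N$-user version of Lemma~\ref{lemma3}, exactly as the paper indicates when it says the remaining analysis is ``along the lines of Section~\ref{sec2}.'' Your added remarks on the one-sidedness of the $\limsup$ estimate and on where the MWDF geometry enters are accurate.
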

\section{General Case With Symmetric Users}\label{sec6}
We will consider the case where there are $N$ users with the vector of timely throughput requirements denoted as $\bm{q}$ with all entries as $q$, i.e. $q_{i} \equiv q,\forall i$. The channel reliabilities of each user will be the same and denoted as $p$, i.e. $p_{i} \equiv p,\quad\forall i$. The frame size is $\tau$ time slots. The rate-region is characterized as the intersection of the following $2^{N}$ halfspaces:
\begin{align}\label{rateregion1}
\sum_{j\in S} \frac{q_{j}}{p_{j}} \leq 1-I_{S}, S\subseteq \{1,\ldots,N\}.
\end{align}
We wil assume that $\bm{q}$ lies on the hyperplane 
\begin{align}\label{hyper}
N\frac{q}{p} = \tau\left(1-I_{\left\{1,\ldots,N\right\}}\right),
\end{align}
and specifically in its relative interior.
The following holds for any non-idling policy,
\begin{align}\label{debtevol}
&\sum_{j=1}^{N} \frac{d_{j}\left(t+1\right) - d_{j}\left(t\right)}{p}\notag\\
&= \sum_{j=1}^{N}  \frac{q -\mathbbm{1}\{\mbox{packet of user $j$ is delivered in frame $t+1$ } \}  }{p}\notag\\
&= \sum_{j=1}^{N} \frac{q}{p} \notag\\
&\qquad-\sum_{j=1}^{N}\frac{\mathbbm{1}\{\mbox{packet of user $j$ is delivered in frame $t+1$ } \}  }{p}.\notag\\
\end{align}
Denoting the probability that a packet of user $j$ is delivered in frame $t+1$ by $\pi_{j}$,
\begin{align}\label{debtevol1}
E\left[\sum_{j=1}^{N} \frac{d_{j}\left(t+1\right) - d_{j}\left(t\right)}{p}\right] &= \sum_{j=1}^{N} \frac{q}{p} - \sum_{j=1}^{N} \frac{\pi_{j}}{p} \notag \\
&=  \sum_{j=1}^{N} \frac{q}{p} - \tau\left(1-I_{\left\{1,\ldots,N\right\}}\right)\notag \\
&=0,
\end{align}
where we have used  $\frac{\pi_{j}}{p} =  \tau\left(1-I_{\left\{1,\ldots,N\right\}}\right)$ from~\cite{c3} page 5 and~(\ref{hyper}). Also note that the distribution of $\sum_{j=1}^{N} \frac{d_{j}\left(t+1\right) - d_{j}\left(t\right)}{p}$ is the same for any non-iding policy, and the distribution of \\
\begin{align*}
&\sum_{j=1}^{N}\mathbbm{1}\{\mbox{packet of user $j$ is delivered in frame $t+1$ } \}\\
&\qquad \triangleq y(t) 
\end{align*}
is given by
\begin{align*}
\Pr\left(y(t) = x\right) = \begin{cases}
{\tau \choose x}p^{x} \left(1-p\right)^{\tau-x} \mbox{ if $x < N$},\\
{ \sum_{y=N}^{\tau}{y \choose N}p^{N} \left(1-p\right)^{y-N} \mbox{ if $x = N$}},\\
{0	\mbox{ if $x > N$}}.
\end{cases}
\end{align*}
Denote by $\sigma_{p,\tau}$ the variance of $\sum_{j=1}^{N} \frac{d_{j}\left(t+1\right) - d_{j}\left(t\right)}{p}$. Summing~(\ref{debtevol}) over $t$, using the fact that $\sum_{j=1}^{N} \frac{d_{j}\left(t+1\right) - d_{j}\left(t\right)}{p}$ are i.i.d. with mean zero~(\ref{debtevol1}) and finite variance, we can use Kolmogorov's law of the iterated logarithm to infer that,
\begin{align}\label{kolmo}
\limsup_{t\to\infty} \frac{1}{\phi\left(t\right)}\left[\sum_{j=1}^{N} \frac{d_{j}\left(t\right)}{p} \right] &= \sigma_{p,\tau} \mbox{ a.s., and }\notag\\
\liminf_{t\to\infty} \frac{1}{\phi\left(t\right)}\left[\sum_{j=1}^{N} \frac{d_{j}\left(t\right)}{p} \right] &= -\sigma_{p,\tau} \mbox{ a.s.}
\end{align}
\begin{lemma}\label{ssc}
For the system~(\ref{hyper}), 
\begin{align}
\lim_{t\to\infty}\frac{1}{\phi\left(t\right)}\left[d_{j}(t)-d_{k}(t)\right] = 0 \mbox{ a.s. }.
\end{align} 
\end{lemma}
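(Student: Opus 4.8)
The plan is to mirror the template of Lemmas~\ref{lemma2},~\ref{lemma8} and~\ref{sec:3:lemma1}: pick a non-negative Lyapunov function $Z$ of the debt vector with uniformly bounded one-step increments and strictly negative drift whenever $Z$ is large, apply Lemma~\ref{lemma1} to get $\limsup_{t\to\infty}E\big[\exp(\theta^{\star}Z(\bm d(t)))\big]\leq C^{\star}$ for some $\theta^{\star}>0$, and then combine this with the Borel--Cantelli lemma exactly as in~(\ref{eq:pf4}) (using $\phi(t)>\sqrt t$) to obtain $Z(\bm d(t))/\phi(t)\to 0$ a.s. For the function below one has $\left|d_{j}(t)-d_{k}(t)\right|\leq p\,Z(\bm d(t))$, so this gives the lemma.

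The natural choice, as in~(\ref{sec:3:lyapunov}), is
\begin{align*}
Z(\bm d)=\sum_{j=1}^{N}\left(\frac{d_{j}}{p}-\min_{k}\frac{d_{k}}{p}\right).
\end{align*}
First I would record the one-frame dynamics. Under MWDF the service order in frame $t+1$ is the decreasing-debt order fixed at the start of the frame, so if $\gamma_{1},\gamma_{2},\dots$ are i.i.d.\ geometric$(p)$ service times then the number of users whose packet is delivered is $K=\big(\max\{k:\gamma_{1}+\cdots+\gamma_{k}\leq\tau\}\big)\wedge N$, the $K$ highest-debt users have their debt changed by $q-1$, the remaining $N-K$ by $q$, and the server idles for $(\tau-\sum_{l\leq N}\gamma_{l})^{+}$ slots when $K=N$. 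Writing $E[Z_{t+1}\mid\bm d(t)]$ by this case split, subtracting $Z_{t}$, and cancelling the common drift via the hyperplane identity $Nq/p=\tau(1-I_{\{1,\dots,N\}})$ (equivalently $E[y(t)]=Nq$, as in~(\ref{debtevol1})) produces the drift; boundedness of the increments of $Z$ is immediate, since each $d_{j}$ and $\min_{k}d_{k}$ changes by at most $\max(q,1-q)/p$ per frame.

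The hard part will be the sign of the drift. Unlike the $\tau=1$ cases, several users are served per frame and the server may idle, so identifying which user attains the minimum at time $t+1$ is genuinely delicate: when $Z$ is large, MWDF can drive some low-debt users below the currently-unserved ones, so $\min_{k}d_{k}$ may momentarily decrease, and one has to verify the net drift stays bounded away from $0$. This is exactly where the relative-interior hypothesis enters: the strict subset inequalities $\sum_{j\in S}q/p<\tau(1-I_{S})$ for $\emptyset\neq S\subsetneq\{1,\dots,N\}$ should be used to argue that, whatever the current shape of $\bm d(t)$, the set of users that actually receive service is delivered at a collective rate strictly exceeding its aggregate demand, which is the contraction that makes the drift negative. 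Should this simple $Z$ fail to give a \emph{uniformly} negative drift — a single very large coordinate can keep $Z$ large while the drift stays small — the fallback is the quadratic function $V(\bm d)=\sum_{j}\big(\tfrac{d_{j}}{p}-\tfrac1N\sum_{k}\tfrac{d_{k}}{p}\big)^{2}$: since $\sum_{j}d_{j}(t)/p$ has mean-zero increments~(\ref{debtevol1}) and larger-debt users are served with larger probability under MWDF, a rearrangement argument makes $E[\Delta V\mid\bm d(t)]\to-\infty$ as the debt vector moves away from the diagonal, and applying Lemma~\ref{lemma1} to a truncation of $V$ (or arguing directly from negative drift with polynomially bounded increments) yields $V(\bm d(t))=o(\phi(t)^{2})$, hence $\left|d_{j}(t)-d_{k}(t)\right|=o(\phi(t))$ a.s.
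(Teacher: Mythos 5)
There is a genuine gap: your argument is a plan whose crucial step --- the uniformly negative drift of $Z(\bm{d})=\sum_{j}\bigl(\tfrac{d_{j}}{p}-\min_{k}\tfrac{d_{k}}{p}\bigr)$ for general $\tau$ and multiple served users per frame --- is exactly the part you leave open, flagging it as ``genuinely delicate'' and offering a fallback that is itself unproven. The fallback has its own problems: the quadratic $V$ has increments that grow with $\|\bm{d}\|$, so Condition $\mathcal{C}_{2}$ of Lemma~\ref{lemma1} fails and the proposed ``truncation'' is not spelled out; moreover, the Borel--Cantelli step as in~(\ref{eq:pf4}) delivers $Z=o(\phi(t))$ from an exponential moment bound on $Z$, but for $V$ you would need control at the scale $\phi(t)^{2}$, which does not follow from Lemma~\ref{lemma1} as stated. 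So, as written, the proposal does not establish Lemma~\ref{ssc}.

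The paper's proof is a one-line deterministic observation that makes all of this machinery unnecessary, and it is worth seeing why symmetry is decisive. Since $q_{i}\equiv q$ and $p_{i}\equiv p$, every client's debt increases by the same amount $q$ each frame, and MWDF attempts clients in decreasing order of debt, so within a frame the set of clients whose packets are delivered consists precisely of those with the largest debts, each losing exactly $1$ relative to the undelivered ones. An easy induction (starting from equal debts) then shows $\abs{d_{j}(t)-d_{k}(t)}\leq 1$ for all $j,k$ and all $t$, \emph{on every sample path}; dividing by $\phi(t)\to\infty$ gives the lemma immediately. In particular the state-space collapse here is exact and deterministic, not merely a high-probability drift phenomenon: the region where your Lyapunov function is large is simply never visited under MWDF in the symmetric case. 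Your approach is the right template for the asymmetric cases treated in Sections~\ref{sec2}--\ref{sec5}, but for this lemma you should replace the drift analysis by the boundedness observation above (or complete the drift estimate, which is substantially harder than the statement requires).
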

\begin{proof}
Note that 
\begin{align*}
d_{j}(t)-d_{k}(t) \leq 1 ,\forall j,k,
\end{align*}
for all $t$ since the MWDF policy gives preference to the client with higher debt.
\end{proof}
\begin{theorem}\label{theoremnotopt}
Under the MWDF policy for the system~(\ref{hyper}), we have
\begin{align}\label{hyperres}
\limsup_{t\to\infty} \frac{1}{\phi\left(t\right)}\frac{d_{j}\left(t\right)}{p} = \frac{\sigma_{p,\tau}}{N} \mbox{ a.s.} ,\forall j.
\end{align}
\end{theorem}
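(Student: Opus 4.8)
The plan is to combine the state-space collapse of Lemma~\ref{ssc} with the Kolmogorov law-of-the-iterated-logarithm bound~(\ref{kolmo}) on the aggregate debt. The key point is that~(\ref{kolmo}) is an almost sure pathwise statement, while Lemma~\ref{ssc} forces every pair of debts to stay within a constant of one another; hence the scaled individual debt $d_j(t)/\phi(t)$ differs from the scaled average $\frac1N\sum_{k=1}^N d_k(t)/\phi(t)$ by a term that vanishes on every sample path, and the $\limsup$ can be pushed through the averaging.

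First I would fix a sample path in the probability-one event on which the conclusion of Lemma~\ref{ssc} holds for every ordered pair $(j,k)$ (a finite intersection of almost sure events), and on which both lines of~(\ref{kolmo}) hold. On this event Lemma~\ref{ssc}, applied with $j,k$ in both orders, gives $|d_j(t)-d_k(t)|\le 1$ for all $t$, so that
\[
\frac{d_j(t)}{p} \;=\; \frac1N\sum_{k=1}^N \frac{d_k(t)}{p} \;+\; \frac1N\sum_{k=1}^N\frac{d_j(t)-d_k(t)}{p},
\]
where the last term is bounded by $1/p$ uniformly in $t$, hence after division by $\phi(t)\to\infty$ it tends to $0$.

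Next I would take $\limsup_{t\to\infty}$ of $\frac{1}{\phi(t)}\frac{d_j(t)}{p}$. Using the elementary facts that $\limsup_t(a_t+b_t)=\limsup_t a_t$ whenever $b_t\to 0$, and that $\limsup_t \tfrac1N c_t=\tfrac1N\limsup_t c_t$ since $1/N>0$, I obtain
\[
\limsup_{t\to\infty}\frac{1}{\phi(t)}\frac{d_j(t)}{p} \;=\; \frac1N\limsup_{t\to\infty}\frac{1}{\phi(t)}\sum_{k=1}^N\frac{d_k(t)}{p} \;=\; \frac{\sigma_{p,\tau}}{N},
\]
the last equality being the first line of~(\ref{kolmo}). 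Since the sample path was arbitrary in a probability-one set, and the argument is uniform in $j$, the claim holds for all $j$ simultaneously, almost surely.

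There is no serious obstacle once Lemma~\ref{ssc} and~(\ref{kolmo}) are available; the only point requiring care is that the $\limsup$ interchange is legitimate \emph{precisely} because the deviation $d_j(t)-\frac1N\sum_k d_k(t)$ is bounded, so its contribution after scaling by $\phi(t)$ is $o(1)$ on every sample path — which is exactly what the MWDF priority rule guarantees through Lemma~\ref{ssc}. One may also note that the matching $\liminf$ statement, $\liminf_{t\to\infty}\frac{1}{\phi(t)}\frac{d_j(t)}{p}=-\sigma_{p,\tau}/N$, follows verbatim from the second line of~(\ref{kolmo}).
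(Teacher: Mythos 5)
Your proposal is correct and is exactly the argument the paper intends: the paper's proof is the one-line remark that the theorem ``follows by using Lemma~\ref{ssc} in conjunction with~(\ref{kolmo})'', and your decomposition of $d_j(t)/p$ into the average $\frac1N\sum_k d_k(t)/p$ plus a deviation that vanishes after scaling by $\phi(t)$ is precisely the detail being elided. The only cosmetic point is that the uniform bound $d_j(t)-d_k(t)\le 1$ is established inside the proof of Lemma~\ref{ssc} rather than stated in it, but the lemma's stated conclusion (the scaled differences tend to $0$ almost surely) already suffices for your limsup interchange, so nothing is missing.
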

\begin{proof}
This follows by using Lemma~\ref{ssc} in conjunction with~(\ref{kolmo}).  
\end{proof}
\subsection{A Notion of Optimality.}
We now introduce an appropriate notion of optimality for the heavy traffic regime and show that the MWDF policy is optimal for the heavy traffic regime for the symmetric scenario of this section. We will restrict ourselves to policies satisfying $\limsup_{t\to\infty} \frac{1}{\phi\left(t\right)}\left|\frac{d_{j}\left(t\right)}{p} \right| $ is finite a.s.. Also we restrict ourselves to policies such that, 
\begin{align}\label{notionopt}
\limsup_{t\to\infty} \frac{1}{\phi\left(t\right)}\frac{d_{j}\left(t\right)}{p} &= v^{\pi}_{j} \mbox{ a.s. } j = 1,\ldots,N \mbox{ and }\notag\\
\liminf_{t\to\infty} \frac{1}{\phi\left(t\right)}\frac{d_{j}\left(t\right)}{p} &= w^{\pi}_{j}\mbox{ a.s. }  j = 1,\ldots,N \notag\\
\mbox{ such that } \sum_{j=1}^{N} v^{\pi}_{j} &= \sigma_{p,\tau},\notag\\
\qquad \sum_{j=1}^{N} w^{\pi}_{j} &= -\sigma_{p,\tau},
\end{align}
where the superscript denotes the variables corresponding to policy $\pi$. Note that $v^{\pi}_{j}$ and $w^{\pi}_{j}$ can be negative or positive. The set of policies satisfying~(\ref{notionopt}) will be denoted as $\Pi$. It is obvious that we do not lose anything by restricting to policies in $\Pi$ since~(\ref{kolmo}) is true for any non-idling policy. The cost associated with a policy is
\begin{align*}
\max_{j=1,\ldots,N}v^{\pi}_{j}.
\end{align*} 
More precisely, we are interested in the following optimization problem,
\begin{align}\label{cost}
\min_{\pi\in\Pi} &\max_{j=1,\ldots,N} v^{\pi}_{j}\notag\\
\mbox{ such that }& \sum_{j=1}^{N}v^{\pi}_{j} = \sigma_{p,\tau},
\end{align}
where $v_{j}$ are as in~(\ref{notionopt}), and $\pi$ denotes a policy. It is clear that the optimal value of this problem is bounded below by $\frac{\sigma_{p,\tau}}{N}$ and occurs when all $v_{j}$ are equal. Since the MWDF policy attains this bound (Theorem~\ref{theoremnotopt}), it is optimal.\\
Optimality of MWDF policy can also be shown in the cases considered in the previous sections. The performance cost defined as above can be seen as guaranteeing a sort of ``fairness" amongst different flows.
\section{Conclusions and Further Work}\label{sec:conclusion} 
 We have analyzed the performance of real-time wireless networks in heavy traffic. We have performed an analysis of the scaled version of debt process and provided bounds on the process as $t\to\infty$. We have also introduced a ``notion of optimality" and have shown that the MWDF policy is optimal in this sense. We believe an analysis of the general case is possible, the primary step being to show that the debt process evolves along the perpendicular to the hyperplane on which the timely throughput vector lies. A careful construction of the Lyapunov function is to be done to achieve this. Once this is done, it should be possible to show the optimality of the MWDF policy with respect to the performance measure defined in the previous section.

\bibliographystyle{IEEEtran}
\bibliography{IEEEabrv,reference}

\begin{thebibliography}{1}
\providecommand{\url}[1]{#1}
\csname url@samestyle\endcsname
\providecommand{\newblock}{\relax}
\providecommand{\bibinfo}[2]{#2}
\providecommand{\BIBentrySTDinterwordspacing}{\spaceskip=0pt\relax}
\providecommand{\BIBentryALTinterwordstretchfactor}{4}
\providecommand{\BIBentryALTinterwordspacing}{\spaceskip=\fontdimen2\font plus
\BIBentryALTinterwordstretchfactor\fontdimen3\font minus
  \fontdimen4\font\relax}
\providecommand{\BIBforeignlanguage}[2]{{%
\expandafter\ifx\csname l@#1\endcsname\relax
\typeout{** WARNING: IEEEtran.bst: No hyphenation pattern has been}%
\typeout{** loaded for the language `#1'. Using the pattern for}%
\typeout{** the default language instead.}%
\else
\language=\csname l@#1\endcsname
\fi
#2}}
\providecommand{\BIBdecl}{\relax}
\BIBdecl

\bibitem{c3}
{I-Hong Hou, Vivek Borkar and P. R. Kumar}, ``{A theory of QoS for wireless},''
  in \emph{In Proc. of IEEE INFOCOM}, 2009.

\bibitem{c5}
{A. Kolmogoroff}, ``Uber das gesetz des iterierten logarithmus,''
  \emph{Mathematische Annalen}, pp. 126--135, 1929.

\bibitem{c1}
A.~Stolyar, ``{Max Weight Scheduling in a Generalized Switch: State Space
  Collapse and Workload Minimization in Heavy Traffic},'' in \emph{Annals of
  Applied Probability}, vol.~14, no.~1, 2004, pp. 1--53.

\bibitem{c2}
A.~Eryilmaz and R.~Srikant, ``Asymptotically tight steady-state queue length
  bounds implied by drift conditions,'' \emph{Queueing Systems}, vol.~72.

\bibitem{c6}
Y.~Chow, ``On a strong law of large numbers for martinagles,'' \emph{The Annals
  of Mathematical Statistics}, vol.~38, no.~2, pp. 610--610, 1967.

\bibitem{c4}
\BIBentryALTinterwordspacing
W.~Stout, ``\BIBforeignlanguage{English}{A martingale analogue of kolmogorov's
  law of the iterated logarithm},''
  \emph{\BIBforeignlanguage{English}{Zeitschrift für
  Wahrscheinlichkeitstheorie und Verwandte Gebiete}}, vol.~15, no.~4, pp.
  279--290, 1970. [Online]. Available:
  \url{http://dx.doi.org/10.1007/BF00533299}
\BIBentrySTDinterwordspacing

\end{thebibliography}
\end{document}